\pgfplotsset{compat=1.8}
\newtheorem{theorem}{Theorem}
\newtheorem{prop}{Proposition}
\newtheorem{corollary}{Corollary}
\newtheorem{definition}{Definition}
\newtheorem{lemma}{Lemma}
\theoremstyle{remark}
\newtheorem*{remark}{Remark}
\begin{document}

\title{Topological constraints on self-organisation in locally interacting systems}\thanks{An interactive preprint: \cite{website} and video abstract: \href{https://www.youtube.com/watch?v=cGcY-ReeGDU}{youtu.be/cGcY-ReeGDU} accompany this paper. We thank Karl Friston for helpful comments and Mikalai Shevko for assistance with code debugging. DARS is supported by the Einstein Chair programme at the CUNY Graduate Center and the VERSES Research Lab.}

\author{Francesco Sacco}
\affiliation{Allen Discovery Center at Tufts University, Medford, MA 02155}
\author{Dalton A R Sakthivadivel}
\email{dsakthivadivel@gc.cuny.edu}
\affiliation{Department of Mathematics, CUNY Graduate Center, New York, NY 10016}
\author{Michael Levin}
\affiliation{Allen Discovery Center at Tufts University, Medford, MA 02155}
\affiliation{Department of Biology, Tufts University}
\affiliation{Wyss Institute for Biologically Inspired Engineering, Harvard University}

\date{\today}

\begin{abstract}

All intelligence is collective intelligence, in the sense that it is made of parts which must align with respect to system-level goals. Understanding the dynamics which facilitate or limit navigation of problem spaces by aligned parts thus impacts many fields ranging across life sciences and engineering. To that end, consider a system on the vertices of a planar graph, with pairwise interactions prescribed by the edges of the graph. Such systems can sometimes exhibit long-range order, distinguishing one phase of macroscopic behaviour from another. In networks of interacting systems we may view spontaneous ordering as a form of self-organisation, modelling neural and basal forms of cognition. Here, we discuss necessary conditions on the topology of the graph for an ordered phase to exist, with an eye towards finding constraints on the ability of a system with local interactions to maintain an ordered target state. By studying the scaling of free energy under the formation of domain walls in three model systems---the Potts model, autoregressive models, and hierarchical networks---we show how the combinatorics of interactions on a graph prevent or allow spontaneous ordering. As an application we are able to analyse why multiscale systems like those prevalent in biology are capable of organising into complex patterns, whereas rudimentary language models are challenged by long sequences of outputs.

\end{abstract}

\keywords{Diverse intelligence, emergence,  self-organisation, Potts model, phase transitions}

\maketitle

\section{Introduction}

Self-organisation is a fascinating phenomenon observed across diverse systems in nature and technology. In biology, self-organisation of complex structures and functions, from subcellular machinery to multicellular morphogenesis, requires alignment of parts in order to navigate a problem space toward specific adaptive ends  \cite{couzin2007collective, watson2011optimization, levin2019computational, kriegman2020scalable, watson2022design, baluvska2023cellular, miller2023revised}. Dynamics of this form can be viewed as a sort of autopoietic cognition, raising interesting questions about how self-organising systems navigate through configuration space given some target morphology \cite{stone1997spirit, deisboeck2009collective, abzhanov2017old, reber2021cognition, lyon2021reframing, levin2022technological, pio2023scaling, mathews2023cellular, levin2023bioelectric, levincollective, lagasse2023future, zhang2025classical}. More recently, current approaches to machine learning have involved physics-inspired models such as the Hopfield network \cite{hopfield1982neural,krotov2016dense,demircigil2017model} and spin glasses \cite{castellani2005spin}, and energy-based \cite{ranzato2006efficient} or diffusion models \cite{yang2023diffusion}. In these systems, in order to maintain a pattern out of equilibrium or produce sensible data over long time spans, there must exist a `condensed' phase with long-range order. Whilst some systems---like those prevalent in biology---demonstrate remarkable abilities to organise at large-scales, other systems that also exhibit some degree of collective intelligence---such as natural language models---have much more limited capabilities \cite{khandelwal-etal-2018-sharp, sun-etal-2021-long, zhao2021calibrate, malkin-etal-2022-coherence, kwa2025measuring}. Despite this, spin systems have also been used to model cellular morphogenesis \cite{graner1992simulation, chaturvedi2005multiscale, torquato2011toward, szabo2013cellular, weber2016cellular}, suggesting the difference goes beyond the substrate of intelligence considered. This paper is motivated by the following question: what is the functional distinction between simple language models and multicellular organisms, and can generative AI harness that property to achieve long-range order? In particular, we are interested in placing explicit thermodynamic bounds on the probability of some example systems occupying such a phase, with applications to estimating their capability to self-organise.

The question of how probable ordered configurations are as the number of elements increases is equivalent to asking about the existence of a phase transition, where the ultimate energetic constraints on the existence of an ordered phase arise from the topology of the interactions between subunits of the system. A quick argument due initially to Landau--Lifshitz \cite[\textsection 149]{landau} shows why the one-dimensional Ising model, the prototypical spin system, has no ordered phase at any temperature; their study of scaling is the technique we will employ in this paper, so we will review it here. Recall that a thermodynamically favourable change in state is one which decreases the free energy
\[
\Delta F = \Delta E - T \Delta S.
\]
Suppose a domain wall of perimeter $P$ forms at an arbitrary location in the chain. Using an effective Hamiltonian, calculating the free energy of the system with zero domain walls and the system with one domain wall shows that the change in free energy scales like $-T\log P$, meaning that forming a domain wall is always thermodynamically favourable for sufficiently large $P$. As a result, long chains are unstable under thermal fluctuations and disorder always propagates through the chain. In the thermodynamic limit there is no spontaneous magnetisation. Ising proves this by finding the partition function and calculating the mean magnetisation analytically in his celebrated 1925 paper, but the advantage of the argument of Landau and Lifshitz is that it applies generically to phase transitions in many different sorts of one-dimensional systems (though the details prove somewhat subtle) and uses a simple scaling argument. The argument to the contrary in dimension greater than one is famously due to Peierls \cite{peierls1936ising}, where it is shown that the change in internal energy has a factor of $P$ balancing the change in entropy. This is found by computing the number of spins an `island' interacts with---which is highly dependent on the properties of the lattice. 

The main results in this paper concern when an ordered phase can exist in a self-organising system based on its interactions. We will focus on the scaling relationship between energy and entropy as it is determined by combinatorial topology. We begin in Section \ref{sec:topo} by presenting our argument for a large universality class of models, and deriving necessary conditions for the existence of an ordered phase in this class. In Section \ref{sec:1d} we analyse the one-dimensional Potts chain as a simple model system, recovering the argument given by Landau--Lifshitz in a more general setting. Section \ref{sec:ar} maps autoregressive models \cite{box1970time} onto this one-dimensional framework and proves their inability to maintain long-range order, with a discussion of transformers following that in Section \ref{transformers-section}. In Section \ref{cliques-section} we apply these results to multiscale systems to evaluate the possibility of order in that context, based on the hierarchies built into the topology of interaction. Finally, we discuss implications of these results and future directions in Sections \ref{limitations-section} and \ref{sec:conclusion}. We conclude that certain systems cannot remain ordered over long ranges for all time, and suggest this no-go theorem as a source of fitness pressure for biological phenomena like stigmergy and embodiment. The key insight of our work will be that topology is the critical factor differentiating these systems. Namely, whilst cells in the human body can coordinate and organize over vast scales, forming coherent tissues and organs, language models models struggle to maintain consistency beyond their limited context windows. This disparity stems directly from the underlying topology of interactions in these systems.

\section{The main argument}
\label{sec:topo}

Consider a $k$-vertex graph with $k$-by-$k$ adjacency matrix $G$ and an $n$-ary variable on each vertex. Denote each such variable as $s_i$, indexed by $i \in \{1, \ldots, k\}$. The interactions between any set of spins will be given by the Hamiltonian $H$.

\begin{definition}
A windowed Hamiltonian $H$ is a Hamiltonian in which spin-spin interactions are defined only within a finite window of interaction $\omega$.
\end{definition}

\begin{remark}
    Without loss of generality, we will assume that the lowest energy state of any windowed Hamiltonian is zero. We will moreover assume the existence of an upper bound to the highest energy of any possible window Hamiltonian, $E^\mathrm{max}_i \leqslant E^\mathrm{max}$.
\end{remark}

We call a sum of such effective Hamiltonians a {\it local Hamiltonian}. In particular,

\begin{definition}
A local Hamiltonian is a Hamiltonian which can be decomposed into a sum of several window Hamiltonians $H=\sum_u H_u$, all of which have the same window length $\omega$.
\end{definition}

For intuition's purposes, one may notice a local Hamiltonian with window size equal to one is a discrete Markov field.

\begin{figure}[h]
    \centering
    \includegraphics[width=\linewidth]{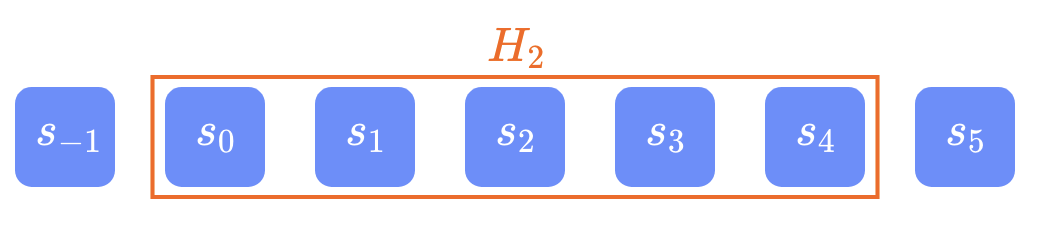}
    \caption{A local Hamiltonian in dimension one. The Hamiltonian of this chain is a sum of windowed Hamiltonians of length $\omega = 5$. Pictured is the second window; the first begins at $s_{-1}$.}
    \label{fig:windowed_ham}
\end{figure}

Our methodology will be given by the following prescription:
\begin{enumerate}
    \item Begin with the system in one of the ordered configurations, {\it e.g.} one of the stored patterns
    \item Create a domain wall
    \item Estimate the energy gained by the system
    \item Find the asymptotics of the free energy as the number of domain walls increases
\end{enumerate}

Computing the change in $F$ by changing the number of domain walls requires detailed knowledge of the combinatorics of the interactions on the graph. Instead we can use the structure imposed by the windows. Namely, we have regulated the length of interactions such that we need only count the number of windows containing a domain wall. This does away with the particularities of the lattice and so can be applied to systems on very generic graphs. We will explore this in the present subsection.

The most general way to consider the interaction between the elements of our system is if the interactions are represented by a graph Hamiltonian. For simplicity we will assume that the edges of the graph are fixed in time; one can easily justify this by appealing to an adiabatic approximation where the birth or death of edges is unlikely for the timescale on which observations are made.

\begin{definition}
Let $G$ be the adjacency matrix of a graph on $k$ vertices and $\bar{H}$ a $k$-by-$k$ matrix. A graph Hamiltonian is the Hamiltonian of a weighted directed graph; namely a Hamiltonian which can be written as
\[
    H=\bar{H} \odot G
\]
where $\odot$ is entry-wise multiplication and $\bar{H}$ has as entries the coupling strengths of the system.
\end{definition}

\begin{figure}[h]
\centering
\begin{tikzpicture}[scale=0.7]
\tikzset{vertex/.style = {draw, circle, fill=white, minimum size=1cm, inner sep=0pt}}
\tikzset{edge/.style = {-}}

\newcommand{\hexgraph}[2]{
    \foreach \i in {1,...,6}{
        \coordinate (#1\i) at ({60*(\i-1)}:2cm);
    }
    
    \draw[edge] (#11) -- (#12);
    \draw[edge] (#12) -- (#13);
    \draw[edge] (#13) -- (#14);
    \draw[edge] (#14) -- (#11);
    \draw[edge] (#15) -- (#16);
    \draw[edge] (#16) -- (#11);
    \draw[edge] (#12) -- (#15);
    
    \foreach \i in {1,...,6}{
        \node[vertex] (#1\i) at (#1\i) {$s_{\i}$};
    }
    
    \node at (0,-3) {#2};
}

\begin{scope}[shift={(-4,0)}]
    \hexgraph{A}{$G$}
\end{scope}

\begin{scope}[shift={(2,0)}]
    \hexgraph{B}{$H$}
    
    \draw[edge] (B1) -- (B2) node[midway, above right] {$e_1$};
    \draw[edge] (B2) -- (B3) node[midway, above] {$e_2$};
    \draw[edge] (B3) -- (B4) node[midway, above left] {$e_3$};
    \draw[edge] (B4) -- (B1) node[midway, above left] {$e_4$};
    \draw[edge] (B5) -- (B6) node[midway, below] {$e_6$};
    \draw[edge] (B6) -- (B1) node[midway, below right] {$e_7$};
    \draw[edge] (B2) -- (B5) node[midway, below right] {$e_5$};
\end{scope}
\end{tikzpicture}
\caption{Illustration of a graph Hamiltonian. Left: the graph $G$ represents the adjacency matrix of an undirected graph with six vertices, where edges indicate connections between vertices $s_i$. Right: the graph $H$ represents the graph Hamiltonian, where the coupling strengths $e_i$ have come from $\bar{H}$.}
\end{figure}

We will now give a recipe for computing the scaling behaviour of energy and entropy for a graph Hamiltonian system. For the sake of simplicity we will work with square planar embeddings of graphs (grid-like lattices) in the present paper. In these cases a domain wall and its perimeter can be easily defined (Figure \ref{fig:circle_grid}).

\begin{figure}[h]
    \centering
    \includegraphics[width=\linewidth]{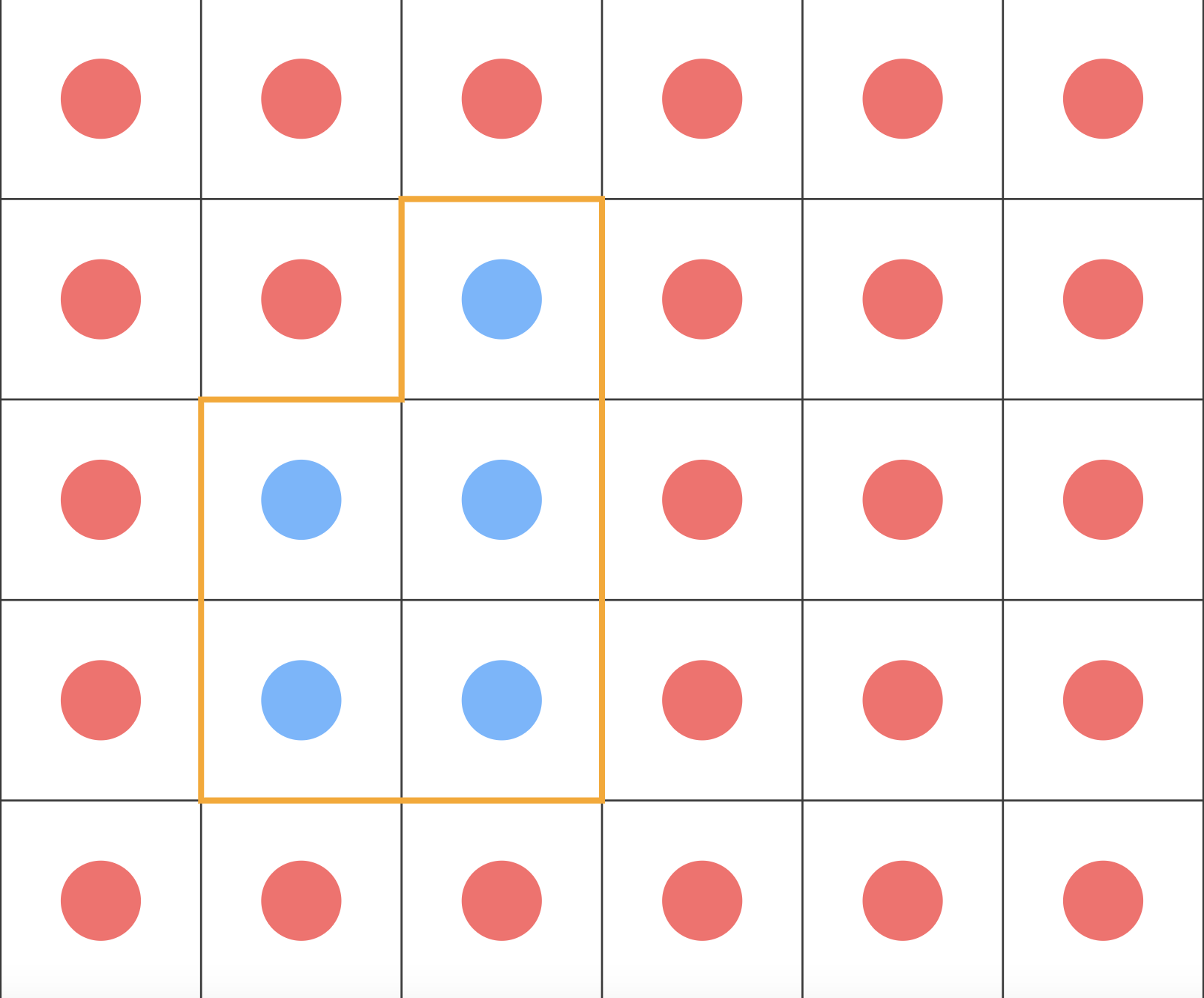}
    \caption{Two different domains in a two-dimensional grid. The perimeter that separates the two domains is drawn in orange.}
    \label{fig:circle_grid}
\end{figure}

Let $H$ be a graph Hamiltonian and $P$ be the perimeter length of a domain wall. As the perimeter length increases, the number of possible configurations of domain barriers increases, increasing the entropy of the system $\Delta S$. We say that the entropy gained scales as $f_S$ if
\begin{equation*}
    \Delta S=O(f_S(P)).
\end{equation*}
We give a similar definition for energy scaling: as the perimeter length increases, the upper and lower bounds of the energy gained scale as $O\left(f_E^\textrm{high}(P)\right)$ and $O\left(f_E^\textrm{low}(P)\right)$ respectively. If $f_E^\textrm{high}=f_E^\textrm{low}\equiv f_E$ we say that the energy gained scales as $f_E$, that is, $\Delta E=O(f_E(P))$.

If we can estimate how the energy and entropy scale, we can verify the existence or non-existence of an ordered phase without knowing the exact formulae for $E(P)$ and $S(P)$, only considering how they scale as $P\to \infty$.

\begin{prop}\label{prop:scaling}
If $O(f_S)=O(f_E)$ then there exists an ordered phase.
\end{prop}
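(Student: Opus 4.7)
The plan is to show that under the hypothesis $O(f_S)=O(f_E)$ the change in free energy associated with creating a domain wall of perimeter $P$ can be made positive for all sufficiently large $P$, provided the temperature is low enough. Once $\Delta F>0$ the ordered configuration is a (local) thermodynamic minimum, so by the standard argument a nonzero-measure set of microstates remains clustered near it in the thermodynamic limit, producing an ordered phase.

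First I would write the free-energy balance $\Delta F=\Delta E-T\,\Delta S$ for the process of nucleating a single domain wall of perimeter $P$ starting from one of the ordered reference configurations, as laid out in the four-step prescription preceding the proposition. Using the scaling definitions introduced just above, the assumption $f_E^{\mathrm{high}}=f_E^{\mathrm{low}}\equiv f_E$ gives two-sided bounds
\[
c_1 f_E(P)\;\leqslant\;\Delta E\;\leqslant\;c_2 f_E(P),
\]
while $\Delta S = O(f_S(P))$ yields a one-sided bound
\[
\Delta S\;\leqslant\; c_3 f_S(P)
\]
for constants $c_1,c_2,c_3>0$ and $P$ large. The hypothesis $O(f_S)=O(f_E)$ then means that $f_S$ and $f_E$ have the same asymptotic order, so there is a constant $c_4>0$ with $c_3 f_S(P)\leqslant c_4 f_E(P)$ eventually.

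Combining these, I would obtain
\[
\Delta F\;\geqslant\;(c_1-T c_4)\,f_E(P)
\]
for all $P$ beyond some $P_0$. Choosing any temperature $T<T_c:=c_1/c_4$ therefore makes $\Delta F>0$ for all sufficiently long perimeters, so nucleating a large domain wall is thermodynamically suppressed. The remaining step is to upgrade this single-wall statement to the existence of an ordered phase: because $f_E$ controls both the energetic cost of each wall and, by assumption, the configurational entropy of wall placements, a standard Peierls-type count (which the paper has already reduced to the scaling balance) shows that the Gibbs weight of configurations differing from the reference pattern on a region of perimeter $P$ decays exponentially in $f_E(P)$, so spontaneous magnetisation persists in the $k\to\infty$ limit.

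The routine parts are the constant-chasing and absorbing lower-order terms into the big-$O$. The main obstacle I anticipate is making the passage from ``one domain wall is costly'' to ``the ordered phase survives in the thermodynamic limit'' fully rigorous: this requires either a Peierls contour argument adapted to the graph-Hamiltonian setting, or an explicit partition-function bound summing $e^{-\beta \Delta F(P)}$ over all admissible wall shapes, and it is there that the equality (rather than mere inequality) of the scaling orders $f_S$ and $f_E$ is essential, since a strictly larger $f_S$ would make the entropy term win at every temperature.
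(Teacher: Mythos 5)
Your proof is correct and follows essentially the same route as the paper's: under $O(f_S)=O(f_E)$ both terms of $\Delta F=\Delta E-T\Delta S$ scale identically in $P$, so for sufficiently small $T$ the energy term dominates and domain-wall formation becomes unfavourable. You are in fact more careful than the paper's two-line proof, which omits the explicit constants, the distinction between the two-sided energy bound and the one-sided entropy bound, and the Peierls-type summation needed to pass from ``a single large wall is costly'' to the existence of an ordered phase.
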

\begin{proof}
Suppose $O(f)=O(f_S)=O(f_E)$. Then the change in free energy is
\begin{equation*}
\Delta F= \Delta E -T\Delta S=\lim_{P\to \infty}O(f(P))-TO(f(P))
\end{equation*}
If we now take $T$ to zero, the change in free energy becomes increasingly positive. As such, the creation of a domain wall is unfavourable.
\end{proof}

In this way we can rule out details of the system that could complicate the combinatorics---for example, the number of stored patterns. Let 
\[
\pi = (\ldots, s_{-1}, s_0, s_1, \ldots)
\]
denote a ground state of the system. One may think of this state as a stored pattern or morphogenetic configuration. When there is a ground state degeneracy---that is, multiple states with zero energy exist, for instance, if a Hopfield model has $m > 1$ stored patterns---they will be enumerated as 
\[
\pi^{\alpha} = (\dots,s^{\alpha}_{-1},s^{\alpha}_0,s^{\alpha}_1,\dots), \quad 1 < \alpha \leqslant m.
\]
We can show that the number of such patterns affects neither the scaling of the entropy nor energy.

\begin{lemma}\label{lm:energy_scaling}
Let $H=\sum_u H_u$. If there exist two energies $E^\mathrm{max}, E^\mathrm{min}$ which are the greatest and least non-zero energy levels of all the windowed Hamiltonians $H_u$ respectively, then at thermal equilibrium, the ability to converge to an ordered phase is independent of energy levels and window sizes.
\end{lemma}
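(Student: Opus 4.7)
The plan is to observe that $\Delta E$ is a sum of contributions from windows straddling the domain wall, count those windows, and then sandwich $\Delta E$ between two constant multiples of $P$. The window size $\omega$ and the extremal energies $E^\mathrm{min}, E^\mathrm{max}$ will enter only as multiplicative prefactors, which the asymptotic notation absorbs, so they cannot change the scaling class of $\Delta E$.

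First, I would fix a ground state $\pi$ and replace the spins inside a region of perimeter $P$ with those of a competing ground state $\pi^{\alpha}$. Since each $H_u$ has zero ground-state energy, any window whose $\omega$ sites lie entirely inside the flipped region, or entirely outside it, sees a locally consistent ground-state configuration and contributes nothing to $\Delta E$. Only windows whose support meets both sides of the boundary, which I will call \emph{straddling}, contribute. On a square planar embedding the number of straddling windows scales linearly in $P$: the admissible placements are indexed by sites within graph-distance less than $\omega$ of the perimeter, so the straddling count is $c(\omega) P + O(1)$ for a constant $c(\omega)$ depending only on $\omega$ and the local combinatorics of the lattice.

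Finally, I would bound the per-window contribution. A straddling window is in a non-ground-state configuration and so contributes an energy in the interval $[E^\mathrm{min}, E^\mathrm{max}]$. Summing over straddling windows gives
\[
c(\omega)\, E^\mathrm{min}\, P - O(1) \;\leqslant\; \Delta E \;\leqslant\; c(\omega)\, E^\mathrm{max}\, P + O(1),
\]
so $f_E^\mathrm{high}(P) = f_E^\mathrm{low}(P) = O(P)$ independently of the particular values of $E^\mathrm{min}, E^\mathrm{max}, \omega$. Combined with Proposition \ref{prop:scaling} (and Lemma \ref{lm:entropy_scaling}, which trivialises the dependence on the number of ground states), the existence of an ordered phase is then decided solely by whether the entropy scaling $f_S$ also lies in $O(P)$, and this is a property of graph topology rather than of the couplings. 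The main obstacle is making the straddling-window count precise in full generality: on square planar embeddings this is an almost trivial geometric observation, but ensuring the count remains linear in $P$ for the more exotic graphs that appear later in the paper would require a separate combinatorial argument, and is the only place where topology could, in principle, feed back into the energy side of the estimate.
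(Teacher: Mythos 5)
Your proposal is correct and follows essentially the same route as the paper: the paper's proof likewise counts the windows intersecting the domain wall (bounding their number by $\omega_1 P$ and $\omega_2 P$ for the smallest and largest window sizes) and sandwiches $\Delta E$ between $\omega_1 P E^{\mathrm{min}}$ and $\omega_2 P E^{\mathrm{max}}$, concluding $\Delta E = O(P)$ with the window size and energy levels appearing only as prefactors. Your version merely makes the straddling-window count explicit, which the paper leaves implicit.
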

\begin{proof}
For any $H_u$ let $\omega_1$ the size of the smallest window and $\omega_2$ that of the largest. The energy gained from the creation of a domain wall is bounded by
\begin{equation*}
\omega_1 P E^\textrm{min}\leqslant \Delta E\leqslant \omega_2 PE^\textrm{max}.
\end{equation*}
In both cases we have $E = O(P)$ so that the asymptotics of the system depend only on the perimeter length.
\end{proof}

\begin{lemma}\label{lm:entropy_scaling}
Let $H$ be a graph Hamiltonian with $m>1$ stored patterns. At thermal equilibrium, the ability to converge to an ordered phase is independent of $m$.
\end{lemma}
\begin{proof}
Let $B(P)$ be the number of possible configurations of domain barriers with perimeter $P$. The change in entropy due to the creation of a domain barrier can always be written as
\[
\Delta S= \log\left[(m-1)B(P)\right] =\log B(P) + \log (m-1).
\]
In the thermodynamic limit, the term proportional to the number of barriers increases, whilst the term proportional to the number of patterns stored stays constant. As such its contribution can be ignored. 
\end{proof}

Since neither the shape nor number of stored patterns affect the thermodynamics of the problem, and neither does the size of the interaction window, we reduce our analysis to the nearest-neighbour Ising model. Our main technical result is the following topological equivalence theorem:

\begin{theorem}\label{main-thm}
All local Hamiltonians on lattices with the same combinatorial structure have asymptotically equivalent free energies.
\end{theorem}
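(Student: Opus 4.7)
The plan is to combine Lemma \ref{lm:entropy_scaling} and Lemma \ref{lm:energy_scaling} to isolate precisely which features of the pair $(H, G)$ survive in the asymptotics of the free energy as $P \to \infty$. First I would fix an arbitrary local Hamiltonian $H = \sum_u H_u$ on a planar lattice with adjacency matrix $G$, pick one of the ground states $\pi^{\alpha}$, and consider the free-energy cost of inserting a single domain wall of perimeter $P$. Writing $\Delta F(P) = \Delta E(P) - T\,\Delta S(P)$ splits the problem into an energetic and an entropic piece, each of which will be handled by one of the preceding lemmas.

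For the energetic piece I would invoke Lemma \ref{lm:energy_scaling}: the sandwich $\omega_1 P E^{\mathrm{min}} \leqslant \Delta E(P) \leqslant \omega_2 P E^{\mathrm{max}}$ already forces $\Delta E(P) = \Theta(P)$. The multiplicative constants depend on the particular couplings and window widths, but the linear scaling, and the fact that \emph{which} windows the wall intersects is dictated by the adjacency data of $G$ alone, are both independent of the microscopic Hamiltonian. For the entropic piece I would invoke Lemma \ref{lm:entropy_scaling}: $\Delta S(P) = \log B(P) + \log(m - 1)$, where $B(P)$ is a purely combinatorial count of domain-barrier configurations of perimeter $P$. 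By construction $B(P)$ sees only the graph $G$ and not the weights $\bar{H}$, while the multiplicity correction $\log(m-1)$ is bounded in $P$ and hence negligible.

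Assembling the two bounds gives $\Delta F(P) = \Theta(P) - T \log B(P) + O(1)$, where both the linear rate and the combinatorial function $B(P)$ are determined entirely by the combinatorial structure of the lattice. Consequently any two local Hamiltonians living on graphs with the same combinatorial structure produce free energies in the same asymptotic class, and Proposition \ref{prop:scaling} is then fed identical scaling data in both cases; this is what I would write up as the content of the theorem.

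The main obstacle is not really analytical but one of bookkeeping: the phrase \emph{asymptotically equivalent} has to be made precise, because the hidden constants in $\Delta E$ genuinely shift with the choice of couplings, and $B(P)$ is only canonically defined up to graph isomorphism. My plan is therefore to declare two free energies equivalent when they lie in the same growth class in $P$, which is exactly what one needs in order to answer the qualitative question of whether $-T \log B(P)$ can overcome the $\Theta(P)$ energy barrier---equivalently, whether an ordered phase exists on the lattice at all, independently of the fine details of $H$.
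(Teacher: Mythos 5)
Your proposal is correct at the level of rigour the paper itself operates at, but it takes a genuinely different route. The paper's proof is a variational comparison: it invokes the Bogoliubov inequality $F \leqslant F_0 + \langle H - H_0 \rangle_0$ to relate the free energy of $H$ to that of a reference Hamiltonian $H_0$ on the same lattice, then argues via Lemma \ref{lm:energy_scaling} (with $P = P_0$ forced by the shared combinatorics) that $\langle H - H_0 \rangle_0 \sim 0$ and via Lemma \ref{lm:entropy_scaling} that the disorder probabilities agree, so the changes in free energy are asymptotically equivalent. You instead bypass any pairwise comparison and show directly that the asymptotic class of $\Delta F(P) = \Theta(P) - T\log B(P) + O(1)$ is an invariant of the lattice alone, since the linear energy rate and the barrier count $B(P)$ depend only on the adjacency data; equivalence of any two Hamiltonians on combinatorially identical lattices then follows by transitivity. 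Your route is more elementary and sidesteps the weakest link in the paper's argument (justifying that the Bogoliubov correction term vanishes asymptotically), and it makes explicit exactly which data feed Proposition \ref{prop:scaling}; the paper's route, in exchange, supplies an actual inequality between the two free energies rather than only a statement about their growth classes, which could in principle be sharpened into quantitative bounds. Your closing remark that ``asymptotically equivalent'' must be pinned down as ``same growth class in $P$'' is well taken---the paper leaves this equally vague---and your reading is consistent with how the theorem is used downstream.
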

\begin{proof}
We will show we can approximate any $H$ with the Hamiltonian of some arbitrary other model in such a way that their asymptotics are equivalent, implying that in the thermodynamic limit, the Peierls argument depends only on the perimeter length in the approximate system. Let $H = \sum_u H_u$ be a local Hamiltonian on a planar graph $G$ with any spin $i$ coupled to $\eta_i$ other spins. Let $H_0$ be the Hamiltonian of an arbitrary system. One knows by the Bogoliubov inequality \cite{notes} that 
\[
F \leqslant F_0 + \langle H - H_0 \rangle_0
\]
and therefore that if $H$ and $H_0$ are asymptotically equivalent then $F \sim F_0$. As the perimeter length increases we have $E = O(P)$ and $E_0 = O(P_0)$ by Lemma \ref{lm:energy_scaling}. If the combinatorics of the lattices are the same---namely, if the set of $\eta_i$ is equal on both lattices---then we can take $P = P_0$. The change in the energy is computed by the Hamiltonian of the new configuration, implying $H - H_0 \sim 0$. By Lemma \ref{lm:entropy_scaling}, the probability of any disorder occurring is asymptotically the same in both systems, so that the {\it change} in free energies is also asymptotically equivalent. The claim follows.
\end{proof}

\begin{corollary}
    The \emph{capacity} for self-organisation in any system on a graph $G$---that is, the existence or non-existence of a phase transition---is equivalent to that of a nearest-neighbour Ising model on the same graph, with
    \[
    H_0 = -J \sum_{i,j} s_i s_j
    \]
    and two stored patterns 
    \begin{gather*}
    \pi^1 = (1, 1, 1, \ldots, 1) \\ \pi^2 = (-1, -1, -1, \ldots, -1).
    \end{gather*}
\end{corollary}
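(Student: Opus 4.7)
The plan is to derive the corollary as a direct specialisation of the topological equivalence theorem, taking the reference system $H_0$ in that theorem to be precisely the nearest-neighbour Ising Hamiltonian on $G$. First I would confirm that this Ising model fits into the class of systems for which the theorem applies: $H_0 = -J \sum_{i,j} s_i s_j$ is a graph Hamiltonian on $G$ (the coupling matrix $\bar{H}$ has entries $-J$ exactly where $G_{ij} = 1$), and it decomposes as a sum of windowed Hamiltonians of window size two---one per edge---so it is a local Hamiltonian in the prescribed sense. Its two ferromagnetic ground states $\pi^1$ and $\pi^2$ correspond to the case $m = 2$ in Lemma \ref{lm:entropy_scaling}.

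Next I would apply the theorem to an arbitrary local Hamiltonian $H$ on $G$ together with this choice of $H_0$. Because the two Hamiltonians live on the very same graph, the coordination numbers $\eta_i$ coincide vertex by vertex, so the combinatorial hypothesis of the theorem is automatically satisfied and the domain-wall perimeter $P$ may be identified across the two systems. The theorem then yields $F \sim F_0$ in the thermodynamic limit.

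Finally I would convert asymptotic equivalence of free energies into equivalence of phase structure via Proposition \ref{prop:scaling}: existence or non-existence of an ordered phase depends only on whether $\Delta E$ and $\Delta S$ share the same leading asymptotic in $P$, and this comparison is preserved under asymptotic equivalence of $F$. Lemmas \ref{lm:entropy_scaling} and \ref{lm:energy_scaling} then ensure that neither the number of stored patterns $m$ of the original Hamiltonian $H$ nor its specific energy levels or window sizes affect that conclusion, so the capacity of $H$ and of the reference Ising $H_0$ to self-organise on $G$ must coincide.

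The main subtlety---more a sanity check than a genuine obstacle---is to ensure that the Bogoliubov step inherited from the theorem controls not merely the absolute free energies but the changes in free energy induced by domain-wall formation. This reduces to verifying that $\Delta E$ and $\Delta E_0$ both scale as $O(P)$ with matching leading asymptotics, which follows by applying Lemma \ref{lm:energy_scaling} separately to each system on the shared lattice, so that $\langle H - H_0 \rangle_0$ contributes only subleading corrections once a domain wall is nucleated.
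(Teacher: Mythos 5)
Your proposal is correct and follows essentially the same route the paper intends: the corollary is stated without an explicit proof as an immediate specialisation of the topological equivalence theorem, and your write-up simply makes explicit the steps the paper leaves implicit (the nearest-neighbour Ising model is a local graph Hamiltonian on $G$ with $m=2$, the coordination numbers coincide since the graph is the same, and Lemmas \ref{lm:entropy_scaling} and \ref{lm:energy_scaling} remove dependence on $m$, energy levels, and window size). Your closing sanity check about $\Delta F$ versus $F$ addresses the same point the paper handles with its remark that the \emph{change} in free energies is asymptotically equivalent.
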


An important remark is that we do not claim the phase transitions are the same. Naturally they will differ in general, for instance if there are more stored patterns in one system than the other. It is the existence of a phase transition from the topology of the lattice which we study.

\section{Stored patterns in the windowed Potts model}\label{sec:1d}

With this in hand, we can study our first model system: the one-dimensional Potts model. The Potts model is a variant of the Ising model whose state vectors are richer than simply binary numbers. In particular, one can encode data in a Potts chain by choosing different spin configurations, recalling the discussion in the Introduction. It is a suitable model to describe stored images or other configurations of multiply-valued variables. 

\begin{definition}
Let $n$ be a finite positive number. A Potts chain $\mathcal{C}$ is a $\mathbb{Z}$-indexed set of $n$-ary integer symbols; that is, a chain of spins $s_i$ indexed by $i \in \mathbb{Z}$, where each $s_i$ can assume any integer value in $\{1, \ldots, n\}$. 
\end{definition}

It follows from Theorem \ref{main-thm} that it is sufficient to consider the scaling of $\mathcal{C}$ to establish the existence of an ordered phase. In the same fashion of Landau--Lifshitz, it can be shown that local Hamiltonians in dimension one do not converge to a prescribed ordered state, since the formation of a domain wall always `interrupts' the pattern.

\begin{theorem}
\label{th:no1d_order}
Let $H$ be a one-dimensional local Hamiltonian with $m > 1$ stored patterns. At non-zero temperature the formation of a domain wall is thermodynamically favourable. 
\end{theorem}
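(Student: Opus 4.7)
The plan is to apply the topological equivalence theorem established above to reduce the problem to a nearest-neighbour Ising chain with two stored patterns, and then execute the Landau--Lifshitz scaling argument on the reduced system. By the topological equivalence theorem the asymptotic free energy of any one-dimensional local Hamiltonian matches that of the nearest-neighbour Ising chain $H_0 = -J \sum_i s_i s_{i+1}$ on $\mathbb{Z}$, and Lemma \ref{lm:entropy_scaling} ensures that passing from $m=2$ to arbitrary $m>1$ contributes only an additive $\log(m-1)$ to $\Delta S$. Hence it suffices to show that $\Delta F \to -\infty$ as the chain length $N\to\infty$ for the two-pattern nearest-neighbour chain at any fixed $T>0$.

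Next I would bound the two asymptotics separately. By Lemma \ref{lm:energy_scaling}, the energy gained on creating a domain wall is at most $\omega P E^\mathrm{max}$. In one dimension a domain wall is just a single cut in the chain---its `perimeter' is a zero-dimensional boundary---so $P$ is bounded by a constant depending only on the window width $\omega$, and $\Delta E = O(1)$ independent of $N$. For the entropy, a single domain wall may be placed at any of the $N$ sites of the chain and there are $(m-1)$ choices for which pattern to switch to on one side, so the number of one-wall configurations is on the order of $(m-1)N$ and $\Delta S = \log N + \log(m-1) = O(\log N)$.

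Combining the two scalings yields
\[
\Delta F = \Delta E - T\Delta S = O(1) - T\cdot O(\log N),
\]
which tends to $-\infty$ as $N\to\infty$ for every fixed $T>0$. The formation of a domain wall is therefore thermodynamically favourable, proving the claim. The main delicacy is confirming that the constant bounding $\Delta E$ is genuinely finite, uniformly in $N$; but since a local Hamiltonian by definition has finite window width $\omega$ and a bounded single-window energy $E^\mathrm{max}$, only boundedly many windows can straddle any given cut of the chain, and the bound is immediate. Everything else is a straightforward consequence of the reductions established earlier in the section.
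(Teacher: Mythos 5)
Your proposal is correct and, once past the first paragraph, is essentially the paper's own argument: you bound $\Delta E \leqslant \omega E^{\mathrm{max}} = O(1)$ by counting the windows straddling the cut, compute $\Delta S = \log[(m-1)N] = O(\log N)$ from the placement and pattern choices, and conclude $\Delta F \to -\infty$ for any fixed $T>0$. The opening reduction via the topological equivalence theorem is superfluous scaffolding—you never actually use the Ising chain, since the direct window-counting bound already works for a general one-dimensional local Hamiltonian—but it does no harm.
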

\begin{proof}
Suppose that our Potts chain starts out in the first ground state or pattern, $\mathcal{C} = \pi^1$. If
\[
    \Delta F = \Delta E - T\Delta S < 0,
\]
so that the free energy of the system decreases upon the formation of a domain barrier, then the formation of a domain barrier is thermodynamically favourable. It is immediate that any $\pi^\alpha$ differs from some other $\pi^\gamma$ by at least one change in spin. In a sequence of length $L$ there are $L-1$ possible places where a domain wall can appear, and at each such place we may obtain one of the $m-1$ other patterns saved. As such the change in entropy of the system is
\[
\Delta S = \log [(m-1)(L-1)].
\]
Upon the formation of a domain barrier, the windowed Hamiltonians that intersect it will have non-zero, positive energy. By assumption the energy is bounded, and no more than $\omega$ windows can be affected by a domain wall, from which we obtain
\[
    0\leqslant \Delta E\leqslant \omega E^\textrm{max}.
\]
The change in free energy is 
\[
    \Delta F \leqslant \omega E^\textrm{max} - T\log [(m-1)(L-1)].
\]
Assume $T > 0$. As $L$ increases, the right hand side of the equation eventually becomes negative.
\end{proof}

\section{Autoregressive models}

\label{sec:ar}

We begin this section by recalling in Figure \ref{fig:text-genesis} the motivation stated in the introduction: contrasting the generation of textual features with the generation of morphological features, we would like to understand how the self-organising nature of text is constrained by the topology of the interactions between subunits generating that text.

\begin{figure}[h]
    \centering
    \includegraphics[width=\linewidth]{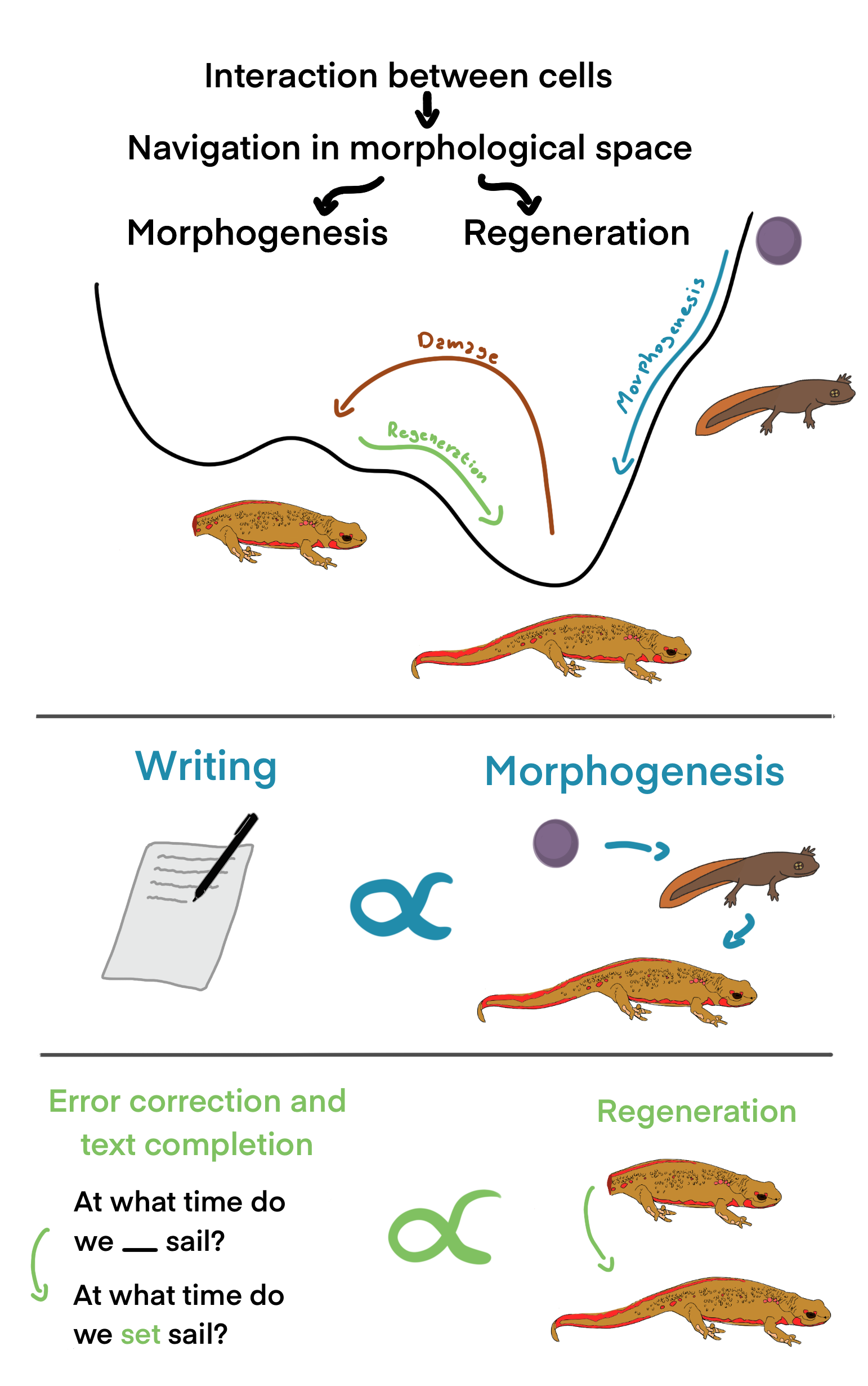}
    \caption{Text generation in analogy to a morphogenetic process.}
    \label{fig:text-genesis}
\end{figure}

We will now discuss a similar result as the one in the previous subsection, for autoregressive models. An autoregressive model is formed when values of a sequence are regressed against previous values of that sequence. Here it will be defined as an estimator of some conditional probability distribution; namely, that of a sequence of observations of some data at step $i$, given a history of observations until $i-1$.

\begin{definition}
Let $\{s_1,\dots, s_{i-1}\}$ be a random $n$-ary sequence of length $i-1$. Given a window (also called context) of length $\omega$, an order $\omega$ autoregressive model computes 
\[
P(s_i \mid s_{i-1}, \ldots s_{i - \omega})
\]
as a probability vector over $\{1, \ldots, n\}$ by generating samples of $s_i$ according to some random process on window states.
\end{definition}

Such a function is called an $\mathrm{AR}(\omega)$ model in particular. As shorthand we will denote the estimator whose output is $v$ when given a sequence $\{s_{i-1}, \ldots s_{i - \omega}\}$ as $M$: 
\[
v \coloneqq M(s_i \mid s_{i-1}, \ldots s_{i - \omega}).
\]
The function $M$ has the type of a conditional probability distribution. We further write
\[
P(s_i = c) = M(s_i = c \mid s_{i-1}, \ldots s_{i - \omega})
\]
for the $c$-th component of $v$.

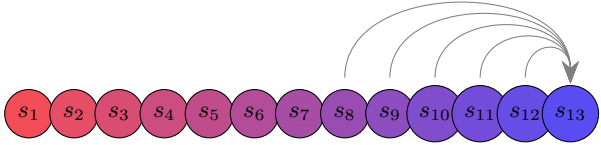
\begin{figure}[h]
    \centering
    
    \begin{tikzpicture}[scale=0.4]
        \colorlet{startcolor}{blue!70!white}
        \colorlet{endcolor}{red!70!white}
        
        \foreach \i in {1,...,13} {
            \pgfmathsetmacro{\x}{(\i-1)*1.5}
            \pgfmathsetmacro{\colorval}{100*\i/14}
            \colorlet{currentcolor}{startcolor!\colorval!endcolor}
            \node[circle, draw=black, fill=currentcolor, minimum size=0.4cm] (s\i) at (\x,0) {\small $s_{\i}$};
        }
        
        \foreach \i in {8,...,12} {
            \pgfmathsetmacro{\endpoint}{(\i-1)*1.5}
            \pgfmathsetmacro{\start}{12*1.5}
            \pgfmathsetmacro{\mid}{(\endpoint+\start)/2}
            \pgfmathsetmacro{\height}{3.5-0.5*(\i-10)}
            \draw[-{Stealth[length=3mm]}, gray] (\endpoint,1.2) .. controls (\endpoint,\height) and (\start,\height) .. (\start,1);
        }
        
    \end{tikzpicture}
    \caption{An autoregressive model with $\omega = 5$.}
    \label{fig:autoregressive_ising}
\end{figure}

To fit this in the discussion of patterns and long-range order in spin chains, we find the following theorem useful. We will set the convention that if $u - \omega < 1$ then the window is empty at that index, and that conditioning on the empty set is the same as taking unconditional probability.

\begin{theorem}
    \label{th:arlm}
    A unique local Hamiltonian with window length $\omega$ can be associated to any $\mathrm{AR}(\omega)$ model.
\end{theorem}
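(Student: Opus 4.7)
The plan is to construct the windowed Hamiltonian directly from the conditional probabilities output by $M$ via the Boltzmann identification $E = -T \log P$, then verify that the construction satisfies the earlier definition of a local Hamiltonian and is pinned down uniquely by the zero-energy normalization stated in the remark after the definition of a windowed Hamiltonian.

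Concretely, for each index $u$ I would define
\[
H_u(s_{u-\omega}, \ldots, s_u) \coloneqq -T \log M(s_u \mid s_{u-1}, \ldots, s_{u-\omega}) + c_u,
\]
choosing $c_u \geq 0$ so that $\min H_u = 0$. Because $H_u$ depends only on the spins inside the window of length $\omega$, it is a windowed Hamiltonian in the sense of the first definition of the paper, and $H \coloneqq \sum_u H_u$ is then a local Hamiltonian by the second definition. To see that this association faithfully represents $M$, I would observe that the Gibbs distribution
\[
P(s_1, \ldots, s_N) \propto \exp(-\beta H), \qquad \beta = 1/T,
\]
factorizes as $\prod_u M(s_u \mid s_{u-1}, \ldots, s_{u-\omega})$ up to the constants $c_u$, which are absorbed into the partition function; this product is already a normalized probability mass function by construction of the AR model, so the proportionality is an equality.

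For uniqueness, I would argue that any other local Hamiltonian $H' = \sum_u H'_u$ of window length $\omega$ inducing the same factorization must satisfy $H'_u = H_u + \text{const}$ on each window, since the pointwise values of $-T \log M(s_u \mid s_{u-1}, \ldots, s_{u-\omega})$ are determined by $M$. The normalization $\min H'_u = 0$ then forces $H' = H$. The main obstacle I anticipate lies in the edge cases: when $u - \omega < 1$ the stated convention reduces the conditional to an unconditional distribution on shorter histories, which must be handled consistently so that the $H_u$ at the boundary are still well defined as windowed Hamiltonians; and when $M$ vanishes on some window state, one must either restrict to strictly positive conditionals or treat $-\log 0 = +\infty$ as an energetic prohibition, in which case the bound $E_i^{\max} \leqslant E^{\max}$ from the earlier remark must be interpreted on the support of $M$.
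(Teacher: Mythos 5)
Your construction is essentially identical to the paper's: the paper likewise sets $H_u(s_u) = -\log M(s_u \mid s_{u-1},\dots,s_{u-\omega}) + \text{const}$ via the Boltzmann identification, fixes the constant so the ground-state energy is zero, and sums the windowed terms to obtain the local Hamiltonian. You actually go further than the paper in two respects---checking that the Gibbs measure factorises back into the AR conditionals, and addressing the ``unique'' in the statement together with the boundary and $-\log 0$ edge cases---none of which appear in the paper's proof, which only exhibits the construction.
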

    \begin{proof}
    Let $M$ be our autoregressive model, and $s_{i-1},\dots,s_{i-\omega}$ our input sequence. The probability that the next observed spin in the sequence is equal to $c$ is
    \[
    P(s_i = c) = M(s_i = c \mid s_{i-1}, \ldots s_{i - \omega}).
    \]
    Suppose\footnote{One could appeal to Gibbs' equation or more general maximum entropy principles to do so.} we assign an energy to each possible $c$ with a scalar function $E\colon \{1, \ldots, n\} \to \mathbb{R}$ and constant weight $\beta \geqslant 0$, such that
    \[
    \beta E_c = -\log P(s_i = c) +\textrm{const} \quad\textrm{with}\quad c \in \{1\dots n\}.
    \]
    Set the constant in such a way that the lowest energy state has energy equal to zero. Now define a windowed Hamiltonian
    \[
        H_u(s_u) = -\log M(s_{u}|s_{u-1},\dots,s_{u - \omega})+\textrm{const}
    \]
    for any $u$ in $\{1, \ldots, i\}$ so that the probability of any $u$-th token occurring before $i$ is conditioned on the window preceding it. The full local Hamiltonian is the sum 
    \[
    H=\sum_u H_{u}(s_u)
    \]
    in an obvious way.
\end{proof} 

\begin{remark}
By Theorem \ref{th:arlm}, the generation of samples by an autoregressive model can now be seen as sampling from a Boltzmann distribution over sequences.
\end{remark}

We close this subsection with the following conclusion. The free energy can be calculated by $F = E - \beta^{-1} S$ as usual. By Theorem \ref{th:arlm}, autoregressive models are one-dimensional systems described by a local Hamiltonian. The following corollary is then a consequence of the scaling argument in Theorem \ref{th:no1d_order}. 

\begin{corollary}
    For any finite $\beta$, an autoregressive model is unable to converge to a single stored pattern. 
\end{corollary}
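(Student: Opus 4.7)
The plan is to combine Theorem \ref{th:arlm} with Theorem \ref{th:no1d_order} to deduce the corollary almost mechanically. First I would invoke Theorem \ref{th:arlm} on the given $\mathrm{AR}(\omega)$ model $M$ to obtain an associated local Hamiltonian $H = \sum_u H_u$ of window length $\omega$ on the chain of tokens $\{s_i\}$. Because sequences are one-dimensional objects indexed by $\mathbb{Z}$, $H$ falls squarely within the scope of Section \ref{sec:1d}.

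Next I would argue that finite $\beta$ is equivalent to positive temperature $T = \beta^{-1}$, so the Boltzmann distribution $e^{-\beta H}$ that governs $M$'s sampling behaviour (as pointed out in the remark following Theorem \ref{th:arlm}) is at $T > 0$. The phrase ``converge to a single stored pattern'' implicitly presupposes that more than one pattern is available; otherwise the statement is vacuous, since a model with a unique maximum-probability sequence would be deterministic. Thus $m > 1$, and all hypotheses of Theorem \ref{th:no1d_order} are met. That theorem then delivers $\Delta F < 0$ for a sufficiently long chain, so any single-pattern configuration is thermodynamically unstable against the proliferation of domain walls as the sequence length $L$ grows.

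The hard part, insofar as there is one, is bridging the equilibrium-thermodynamic conclusion (no ordered phase) with the operational, dynamical claim (failure to converge when sampling). This bridge is already supplied by the remark after Theorem \ref{th:arlm}, which identifies autoregressive generation with sampling from the Boltzmann distribution for $H$: the nonexistence of an ordered equilibrium is therefore the nonexistence of sampling concentration on any single pattern. No fresh scaling estimates are needed, since the bound $\Delta F \leqslant \omega E^\textrm{max} - T\log[(m-1)(L-1)]$ already derived in the proof of Theorem \ref{th:no1d_order} becomes negative as $L \to \infty$ for every $T > 0$, and this is precisely the content of the corollary.
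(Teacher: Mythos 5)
Your proposal is correct and follows essentially the same route as the paper: the paper likewise deduces the corollary by composing Theorem \ref{th:arlm} (which realises the $\mathrm{AR}(\omega)$ model as a one-dimensional local Hamiltonian sampled from a Boltzmann distribution at temperature $T=\beta^{-1}>0$) with the scaling bound of Theorem \ref{th:no1d_order}. Your additional remarks on $m>1$ and on bridging the equilibrium statement to the sampling behaviour are consistent with, and slightly more explicit than, the paper's own one-line justification.
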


\section{Transformers and attention}\label{transformers-section}

An interesting question to ask is what these results imply for the organisational capabilities, and hence intelligence, of large language models. Most state-of-the-art language models of today are decoder-only transformer architectures (see {\it e.g.} \cite{lin2022survey}), meaning that to predict the next word (or token) they perform autoregression on all the previous elements of text---in other words, they are autoregressive at inference-time. Indeed, transformers can be described as spin collectives \cite{ramsauer2020hopfield, millidge2022universal, krotov2023new}, with training dynamics under asynchronous updates behaving like a spin chain evolving under Glauber dynamics and text generation being sampling from the corresponding equilibrium distribution, just as we have modelled autoregression by a spin chain here. A consequence of the results in \S\ref{sec:ar} is a fundamental limitation on the coherence of a simple version of a large language model for long sequences of outputs---and hence an explanation for limitations observed in papers such as \cite{kwa2025measuring}.

\begin{prop}
    Causally-masked attention in a decoder-only model has no ordered phase. 
\end{prop}
\begin{proof}
    Let $s \in \mathbb{R}^k$ be the configuration of the network, $X$ a matrix of $m$ stored patterns $(\pi^1, \ldots, \pi^m)$ with any $(X)_\alpha = \pi^\alpha = (s_1, \ldots, s_k)$ and $X_{i\alpha} = \pi^\alpha_i = s_i$, and $\beta$ a constant scalar. Now $X$ is a $k$-by-$m$ matrix of $n$-ary variables. One knows (see {\it e.g.} \cite{ramsauer2020hopfield}) the standard attention algorithm with the identity matrix for weights is a modern Hopfield network with an exponential potential function \cite{demircigil2017model}. The asynchronous updates maximise the probability of a Gibbs distribution, where the energy $\beta E(s)$ of a configuration, $-\log P(s) + \mathrm{const}$, is
    \begin{align*}
        - &\log(\sum_{\alpha=1}^m \exp(\beta \sum_{i=1}^k X_{i\alpha} s_i)) + \frac{\beta}{2}\sum_{i=1}^k s_is_i \\ &+ \log m + \frac{\beta}{2}\left(\max_\alpha \| X_\alpha\|\right)^2.
    \end{align*}
    With a causally-masked positional encoding from $i-\omega$ to $i$, the expression at $i$ with context length $\omega$ becomes
    \begin{align*}
        - &\log(\sum_{\alpha=1}^m \exp(\beta \sum_{j=i-\omega}^i X_{j\alpha} s_j)) + \frac{\beta}{2}\sum_{j=i-\omega}^i s_js_j \\ &+ \log m + \frac{\beta}{2}\left(\max_\alpha \| X_\alpha\|\right)^2.
    \end{align*}
    This can easily be seen as an expression for a local Hamiltonian in a particular $\mathrm{AR}(\omega)$ model; it follows from Theorem \ref{th:arlm} that we can apply Theorem \ref{th:no1d_order} to the transformer, completing the proof of the statement. 
\end{proof}

For simplicity we have used no mapping in an associative space, however the result generalises readily to those cases by the results in \cite{ramsauer2020hopfield}. 

More sophisticated large language models employ multi-headed attention algorithms \cite{vaswani2017attention}, where different attention heads in the decoder are attenuated to different classes of relationship. However, even in attention with heads for long-range dependencies, a maximum context length is imposed as a hyperparameter for reasons of computational cost (see {\it e.g.} the implementations in the review \cite{phuong2022formal}). Whilst the algorithm itself is more sophisticated, and allows better performance when applied to two-dimensional data ({\it e.g.} images), the coherence of a sequence of text is constrained by context length, which is a realistic and necessary-to-consider constraint given hardware. This limitation is shown empirically by the fact that the length of tasks which modern large language models are able to perform is still finite \cite{kwa2025measuring}, despite engineering pushing this limit further and further.

More discussion on this will be given at the end of \S\ref{cliques-section}.

\section{Large hierarchical systems on graphs}\label{cliques-section}

In many systems of interest there are hierarchical interactions, {\it i.e.} effective interactions between subgraphs. Many multiscale systems in biology and physics exhibit complex patterns consisting of pockets of regularity assembled into structures with global irregularity, such as tissues with different morphogenetic features assembled out of their constituent cells \cite{thomson1988morphogenesis, engler2009multiscale, levin2012morphogenetic, chan2017coordination, kuchling2020morphogenesis, mcmillen2024collective}, functional networks in the brain consisting of regions specialised to process certain sorts of information \cite{balduzzi2008integrated, kanwisher2010functional, johnson2011interactive, ramstead2021neural}, and self-assembling molecules in active matter situations \cite{sanchez2012spontaneous, haxton2013hierarchical, whitelam2015hierarchical, doncom2017dispersity, mcgivern2020active}. In this section we will investigate the interplay between the scaling of free energy at one level of a hierarchy and that of another, when the topology of the graph organises those levels meaningfully, to quantify when such patterns are possible. We will complement the results in previous sections by showing that systems lacking hierarchical structure may be limited in their ability to form complex patterns.

Recall that a clique is a (non-empty) complete induced subgraph. Suppose there exist $\ell > 1$ independent\footnote{By independent we mean the vertex set of each clique is disjoint from that of each other clique.} cliques in $G$, with $n_1, \ldots, n_\ell$ vertices respectively, and each $n_i > 2$. There will be a macrostate associated to each clique giving the magnetisation of that subgraph. We will argue that there are ways for local order but global disorder to exist in systems with cliques. Since the effective behaviour of the system, {\it i.e.} the properties of the cliques, is often a meaningful experimental variable (and hence a useful macrostate), this says organising a system hierarchically can create interesting order phenomena. In particular, we want to show there can exist `hierarchical behaviours', where each clique individually is a coherent phase, but that phase varies from clique to clique.

\begin{figure}[h]
\centering
\begin{tikzpicture}[scale=0.7]
\tikzset{vertex/.style = {draw, circle, fill=white, minimum size=1cm, inner sep=0pt}}
\tikzset{edge/.style = {-}}

\newcommand{\hexgraph}[2]{
    \foreach \i in {1,...,6}{
        \coordinate (#1\i) at ({60*(\i-1)}:2cm);
    }
    
    \draw[edge] (#11) -- (#12);
    \draw[edge] (#12) -- (#13);
    \draw[edge] (#12) -- (#14);
    \draw[edge] (#13) -- (#14);
    \draw[edge] (#14) -- (#16);
    \draw[edge] (#15) -- (#16);
    \draw[edge] (#16) -- (#11);
    \draw[edge] (#15) -- (#11);

    \foreach \i in {1,...,6}{
        \node[vertex] (#1\i) at (#1\i) {$s_{\i}$};
    }
    
    \node at (0,-3) {#2};
}

\begin{scope}[shift={(-4,0)}]
    \hexgraph{A}{$G$}
\end{scope}

\begin{scope}[shift={(2,0)}]
    \hexgraph{B}{Cliques in $G$ highlighted}
    
    \draw[edge, blue, thick] (B2) -- (B3);
    \draw[edge, blue, thick] (B3) -- (B4);
    \draw[edge, blue, thick] (B2) -- (B4);
    \draw[edge, red, thick] (B5) -- (B6);
    \draw[edge, red, thick] (B5) -- (B1);
    \draw[edge, red, thick] (B1) -- (B6);
    \end{scope}
\end{tikzpicture}
\caption{A graph $G$ with two independent $3$-vertex cliques and two edges connecting the cliques.}
\end{figure}

For brevity, and without loss of generality\footnote{This is because the $J$'s do not affect the combinatorics used to compute the entropy; as in Lemma \ref{lm:energy_scaling}, we are permitted to simply take $J$ large enough to bound all of the true couplings in the clique.}, we will assume the coupling constants $J$ are uniform across the graph. We will need the following observation before we prove our main theorem in this section. Recall that Boltzmann's formula $S = k \log W$ denotes by $W$ the multiplicity of a macrostate. A clique is positively (negatively) magnetised if all spins in the clique have value $+1$ ($-1$). We will begin with $\ell$ positively magnetised cliques. If the number (denote it $r$) of `flipped' ({\it i.e.} uniformly changed) cliques is a macrostate, then the multiplicity is the number of ways to arrange $r$ distinguished cliques out of the total $\ell$ cliques. As such, we have 
\[
S = k \log {\ell\choose r}.
\]

\begin{theorem}\label{clique-thm}
    Take any of the $\ell$ cliques. There exist parameter regimes where individual cliques may change from positive to negative magnetisation. For that temperature, take any individual $n_i$-clique and consider a spin within it. If the coupling of every spin in the clique is greater than $\frac{T}{2}\log n_i$ then the clique remains uniformly magnetised.
\end{theorem}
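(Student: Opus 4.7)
The theorem splits into two claims: (i) existence of a parameter regime in which individual cliques may flip magnetisation as units, and (ii) a sufficient coupling threshold, $J > \frac{T}{2} \log n_i$, ensuring each clique stays internally uniform at that same temperature. I would prove both by Landau--Lifshitz-style free energy comparisons, but at two different scales: the coarse-grained scale of cliques-as-macroscopic-spins for (i), and within a single clique for (ii).

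For claim (i), I would use the macrostate $r$ counting how many cliques have flipped from the all-positive reference state; the entropy is $S = k \log \binom{\ell}{r}$, as already computed in the preamble. Because a uniform flip of an entire clique preserves every bond inside it, the energy cost comes only from inter-clique edges, and is therefore a finite constant $C$ for any single-clique flip (depending on that clique's external degree). Choosing $r = 1$ gives $\Delta F \leqslant C - T \log \ell$, which is negative whenever $T > C / \log \ell$. This produces the desired parameter regime and identifies the temperatures $T$ referred to in the second part of the theorem.

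For claim (ii), I would fix an $n_i$-clique initialised to all $+1$ spins and consider the macrostate in which $r$ spins are flipped, $1 \leqslant r \leqslant n_i - 1$. Since the clique is complete, the number of broken bonds is exactly $r(n_i - r)$, giving $\Delta E = 2J r(n_i - r)$, while the entropy of this macrostate is $\Delta S = \log \binom{n_i}{r}$. Using the hypothesis $2J > T \log n_i$ together with the elementary bound $\log \binom{n_i}{r} \leqslant \min(r, n_i - r) \log n_i$, one obtains
\[
\Delta E = 2J r(n_i - r) > T r(n_i - r) \log n_i \geqslant T \min(r, n_i - r) \log n_i \geqslant T \Delta S,
\]
so $\Delta F > 0$ for every admissible $r$. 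No intra-clique domain wall is thermodynamically favourable, and the clique therefore remains uniformly magnetised.

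The main technical obstacle is getting a combinatorial bound on $\log \binom{n_i}{r}$ that matches the stated coupling threshold uniformly in $r$: a naive bound like $\log\binom{n_i}{r} \leqslant n_i \log 2$ would force a tighter hypothesis than the theorem claims. The inequality $\log \binom{n_i}{r} \leqslant \min(r, n_i - r) \log n_i$ is both elementary and sharp enough, and it combines cleanly with $r(n_i - r) \geqslant \min(r, n_i - r)$ for $r \in [1, n_i - 1]$ to give the chain of inequalities above. Once this single combinatorial fact is in hand, both parts reduce to direct free-energy comparisons at their respective scales.
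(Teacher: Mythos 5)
Your proof is correct, and for the first claim it follows essentially the same route as the paper: compare the energy of flipping whole cliques (bounded independently of $\ell$) against the configurational entropy $\log\binom{\ell}{r}$ of choosing which cliques flip, and observe that entropy wins for large $\ell$. The paper works with general $r$ and assigns energy $\sum_{\gamma}2Jn_{i_\gamma}$ to the flips (a cost growing with clique size, which it needs later for the critical-temperature proposition), whereas you take $r=1$ and charge only the inter-clique bonds; both give the required regime. Where you genuinely diverge is the second claim. The paper only rules out \emph{single-spin} flips inside a clique: it counts $n_i$ ways to flip one spin, charges energy $2J$, and concludes from $2J>T\log n_i$. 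You instead exclude \emph{every} non-uniform internal configuration by considering $r$ flipped spins for all $1\leqslant r\leqslant n_i-1$, with $\Delta E = 2Jr(n_i-r)$ from the complete-graph bond count and the bound $\log\binom{n_i}{r}\leqslant\min(r,n_i-r)\log n_i$, chained through $r(n_i-r)\geqslant\min(r,n_i-r)$. This is a strictly stronger argument, and arguably the one the stated conclusion (``the clique remains uniformly magnetised'') actually requires, since a clique could in principle fail to be uniform via a multi-spin excitation even if single flips are unfavourable. It also quietly corrects the paper's energy accounting: flipping one spin in a complete graph on $n_i$ vertices breaks $n_i-1$ bonds, costing $2J(n_i-1)$ rather than $2J$, which only makes the hypothesis $2J>T\log n_i$ easier to satisfy. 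The trade-off is that your version leans on the combinatorial inequality for $\binom{n_i}{r}$, while the paper's is a one-line estimate that suffices if one is content to test only elementary excitations.
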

\begin{proof}
Flipping $r$ cliques from positive to negative magnetisation takes an energy of $\sum_{\gamma=1}^r 2J{n_i}_\gamma$ where $\gamma$ indexes cliques flipped, and has an entropy of the logarithm of $\ell$ choose $r$. The full change in free energy is 
\[
\Delta F = \sum_{\gamma=1}^r 2J{n_i}_\gamma - T\log\frac{\ell!}{r!(\ell - r)!}
\]
so that the scaling behaviour of $r$ flips is 
\[
\Delta F \sim O(r) - TO(r \log \ell - r \log r + r),
\]
where we have obtained the approximation of the entropy from ${\ell\choose r} \approx \ell^r / r!$ and Stirling's formula. Clearly if (up to some constant)
\[
T > \frac{1}{\log \ell - \log r + 1}
\]
then $\Delta F < 0$. Hence there exist temperatures for which it is favourable for some number $r$ of entire cliques to change. Fix such a $T$. Within an $n_i$-clique there are $n_i$ ways to flip a single spin. This takes energy $2J$. If we have $2J > T\log n_i$, then there is no domain wall within the clique.
\end{proof}

By minding constants in the $O(r)$ terms above we show the hypothesis can be attained---namely, that there exist values of $T$ for which the inequality is satisfied.

\begin{prop}
    Let $n_{\mathrm{max}}$ be an integer greater than zero denoting the number of vertices in the largest clique. There exists a non-empty critical temperature range of hierarchical behaviour.
\end{prop}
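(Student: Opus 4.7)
The plan is to extract the two temperature constraints that appeared in the proof of the preceding theorem and show they are jointly satisfiable. The theorem produced a lower bound on $T$, of the form $T > (\log\ell - \log r + 1)^{-1}$ (up to the constant absorbed in the $O(r)$ term), which guarantees that collectively flipping $r$ entire cliques is thermodynamically favourable; and it produced an upper bound $T < 2J/\log n_i$ for every clique, which prevents the formation of a single-spin domain wall inside that clique. Taking the tightest upper bound over all cliques, we obtain $T < 2J/\log n_{\max}$.

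First I would write both bounds explicitly and define the candidate critical interval
\[
T \in \left(\tfrac{1}{\log(\ell/r)+1},\; \tfrac{2J}{\log n_{\max}}\right).
\]
Showing this interval is non-empty amounts to verifying
\[
\log n_{\max} < 2J\bigl(\log(\ell/r)+1\bigr),
\]
which I would do by observing that all of $J$, $\ell$, and $r$ are free parameters of the hierarchical model (the coupling strength is set by the physics, while $\ell$ is the number of independent cliques and $r$ the macrostate we are monitoring). In particular, for any fixed $n_{\max}$ one may take $J$ large enough, or equivalently for fixed $J$ one may take $\ell$ sufficiently large relative to $r$, so that the right-hand side exceeds $\log n_{\max}$. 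Hence the interval is non-empty.

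To keep the argument clean I would then state explicitly the physical content: the lower bound is a Peierls-type inter-clique argument, using the fact that the entropy of rearranging $r$ flipped cliques among $\ell$ sites grows like $r\log(\ell/r)+r$ while the energetic cost grows only linearly in $r$; the upper bound is an intra-clique no-go statement, essentially Theorem~\ref{th:no1d_order} applied at the level of a single clique where the effective perimeter is bounded by $n_i$. A non-empty intersection of the two regimes is precisely what we mean by hierarchical behaviour, and I would conclude by pointing out that the interval shrinks as $n_{\max}$ grows relative to $\ell/r$, matching the intuition that very large cliques are hard to stabilise internally while an inter-clique pattern wanders.

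The main obstacle, if any, is largely cosmetic rather than mathematical: one must be careful about the constants that were absorbed into $O(r)$ when deriving the lower bound on $T$, since those constants depend on $J$ and on the Stirling approximation regime (requiring $r$ not too close to $0$ or $\ell$). I would handle this by restricting to $1 \leqslant r \leqslant \ell/2$, where Stirling's formula is uniformly controlled, so that the lower bound on $T$ reduces to an explicit expression in $\ell$, $r$, and $J$, and the non-emptiness claim reduces to an elementary inequality.
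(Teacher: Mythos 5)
Your overall strategy is the same as the paper's: take the two bounds on $T$ produced by the preceding theorem and show that their intersection is non-empty by tuning the free parameters. However, there is a concrete error in how you handle the lower bound. The energy of flipping one clique is $2J n_{i_\gamma}$, not $2J$, so the cost of flipping $r$ cliques is $\sum_{\gamma=1}^r 2J n_{i_\gamma} \leqslant 2J r n_{\mathrm{max}}$; the ``constant absorbed into $O(r)$'' is therefore $2J n_{\mathrm{max}}$ in the worst case, and it carries exactly the $n_{\mathrm{max}}$-dependence that the proposition is about. The correct lower bound is
\[
T > \frac{2J\, n_{\mathrm{max}}}{\log(\ell/r)+1},
\]
not $T > \bigl(\log(\ell/r)+1\bigr)^{-1}$. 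Comparing with the upper bound $T < 2J/\log n_{\mathrm{max}}$, the factor $2J$ cancels, so your proposed fix of ``taking $J$ large enough'' does nothing: both endpoints of the candidate interval scale linearly in $J$, and the non-emptiness condition is $J$-independent. The correct condition is $n_{\mathrm{max}}\log n_{\mathrm{max}} < \log(\ell/r)+1$, that is, $\ell/r > n_{\mathrm{max}}^{n_{\mathrm{max}}}/e$, which is the inequality the paper arrives at. Your condition $\log n_{\mathrm{max}} < 2J\bigl(\log(\ell/r)+1\bigr)$ is not equivalent and would wrongly suggest that the window can be opened for arbitrary $\ell, r$ by increasing the coupling.

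Your second fallback---taking $\ell$ sufficiently large relative to $r$---does repair the argument, since $\log(\ell/r)$ can be made to exceed $n_{\mathrm{max}}\log n_{\mathrm{max}} - 1$ for any fixed $n_{\mathrm{max}}$; this is essentially the paper's conclusion, and with that substitution your proof goes through. The remark about restricting to $1 \leqslant r \leqslant \ell/2$ to control Stirling's approximation is sensible housekeeping, but it is not where the real difficulty lies: the delicate point is the $n_{\mathrm{max}}$ factor in the energy term, not the entropy approximation.
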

\begin{proof}
    From the theorem above we have 
    \[
    \frac{2J}{\log n_i} > T > \frac{2J \sum_{\gamma=1}^r {n_i}_\gamma }{r \log \ell - r \log r + r}.
    \]
    We also know that 
    \[
    r n_{\mathrm{max}} \geqslant \sum_{\gamma=1}^r {n_i}_\gamma
    \]
    and
    \[
    \frac{1}{\log n_{i}} \geqslant \frac{1}{\log n_{\mathrm{max}}},
    \]
    so it suffices to take 
    \[
    \frac{2J}{\log n_{\mathrm{max}}} > T > \frac{2J r n_{\mathrm{max}}}{r \log \ell - r \log r + r}.
    \]    
    It follows that
    \[
    \frac{1}{\log n_{\mathrm{max}}} > T > \frac{1}{\log \ell - \log r + 1} n_{\mathrm{max}}.
    \]
    Since the upper bound is a strictly decreasing function of $n_{\mathrm{max}}$ with a singularity at one, and the lower bound is a linear function of $n_{\mathrm{max}}$, they intersect for some sufficiently small slope, before which the inequality is satisfied. We conclude that there exist $\ell, r$ for which the set of satisfactory $T$ is non-empty. In particular, this occurs whenever
    \[
    \frac{\ell}{r} > \frac{n_{\mathrm{max}}^{n_{\mathrm{max}}}}{e}.
    \]
\end{proof}

By looking at the effective behaviour of the graph we consider the statistical properties of cliques to be like those of individual spins. When those cliques themselves form cliques, we have complete induced subgraphs on an effective graph and can make this argument again (see Figure \ref{multiscale-fig} for an example of what is meant).

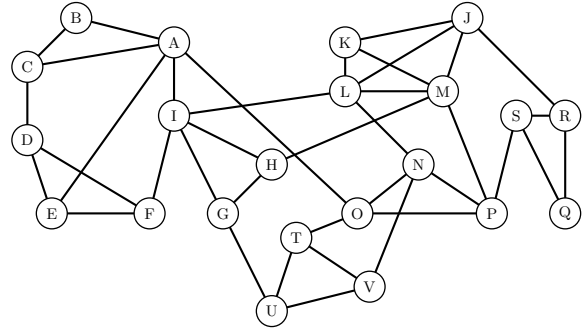
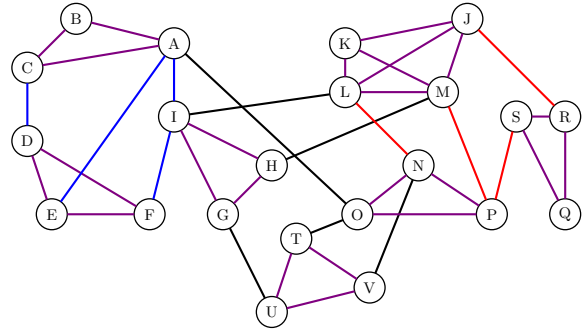
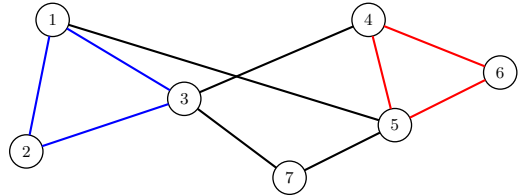
\begin{figure}[h]
\centering

\begin{subfigure}{0.5\textwidth}
\begin{tikzpicture}[scale=0.65, transform shape]

\tikzset{vertex/.style = {draw, circle, fill=white, minimum size=1cm, inner sep=0pt}}
\tikzset{edge/.style = {-}}
  \Vertex[x=2,y=3.5]{A}
  \Vertex[x=0,y=4]{B}
  \Vertex[x=-1,y=3]{C}

  \Vertex[x=-1,y=1.5]{D}
  \Vertex[x=-0.5,y=0]{E}
  \Vertex[x=1.5,y=0]{F}
  
  \Vertex[x=3,y=0]{G}
  \Vertex[x=4,y=1]{H}
  \Vertex[x=2,y=2]{I}  

  \Vertex[x=8,y=4]{J}
  \Vertex[x=5.5,y=3.5]{K}
  \Vertex[x=5.5,y=2.5]{L}
  \Vertex[x=7.5,y=2.5]{M}

  \Vertex[x=7,y=1]{N}
  \Vertex[x=5.75,y=0]{O}
  \Vertex[x=8.5,y=0]{P}

  \Vertex[x=10,y=0]{Q}
  \Vertex[x=10, y=2]{R}
  \Vertex[x=9,y=2]{S}

  \Vertex[x=4.5,y=-0.5]{T}
  \Vertex[x=4,y=-2]{U}
  \Vertex[x=6,y=-1.5]{V}
  \tikzstyle{LabelStyle}=[fill=white,sloped]

  \Edge(A)(O)
  \Edge(H)(M)
  \Edge(L)(I)
  \Edge(N)(V)
  \Edge(G)(U)
  \Edge(O)(T)
  
  \Edge(A)(B)
  \Edge(B)(C)
  \Edge(C)(A)
  
  \Edge(D)(E)
  \Edge(E)(F)
  \Edge(D)(F)
  
  \Edge(G)(H)
  \Edge(H)(I)
  \Edge(I)(G)
  
  \Edge(J)(K)
  \Edge(K)(L)
  \Edge(L)(M)
  \Edge(J)(L)
  \Edge(J)(M)
  \Edge(K)(M)

  \Edge(N)(O)
  \Edge(O)(P)
  \Edge(N)(P)
  
  \Edge(Q)(R)
  \Edge(R)(S)
  \Edge(Q)(S)

  \Edge(T)(U)
  \Edge(U)(V)
  \Edge(T)(V)
  
  \Edge(A)(E)
  \Edge(A)(I)
  \Edge(C)(D)
  \Edge(F)(I)
  
  \Edge(J)(R)
  \Edge(L)(N)
  \Edge(M)(P)
  \Edge(P)(S)
  
\end{tikzpicture}
\caption{Full graph with all interactions}
\end{subfigure}

\begin{subfigure}{0.5\textwidth}
\begin{tikzpicture}[scale=0.65, transform shape]

\tikzset{vertex/.style = {draw, circle, fill=white, minimum size=1cm, inner sep=0pt}}
\tikzset{edge/.style = {-}}
  \Vertex[x=2,y=3.5]{A}
  \Vertex[x=0,y=4]{B}
  \Vertex[x=-1,y=3]{C}

  \Vertex[x=-1,y=1.5]{D}
  \Vertex[x=-0.5,y=0]{E}
  \Vertex[x=1.5,y=0]{F}
  
  \Vertex[x=3,y=0]{G}
  \Vertex[x=4,y=1]{H}
  \Vertex[x=2,y=2]{I}  

  \Vertex[x=8,y=4]{J}
  \Vertex[x=5.5,y=3.5]{K}
  \Vertex[x=5.5,y=2.5]{L}
  \Vertex[x=7.5,y=2.5]{M}

  \Vertex[x=7,y=1]{N}
  \Vertex[x=5.75,y=0]{O}
  \Vertex[x=8.5,y=0]{P}

  \Vertex[x=10,y=0]{Q}
  \Vertex[x=10, y=2]{R}
  \Vertex[x=9,y=2]{S}

  \Vertex[x=4.5,y=-0.5]{T}
  \Vertex[x=4,y=-2]{U}
  \Vertex[x=6,y=-1.5]{V}
  \tikzstyle{LabelStyle}=[fill=white,sloped]
    
  \tikzstyle{EdgeStyle}=[]
  \Edge(A)(O)
  \Edge(H)(M)
  \Edge(L)(I)
  \Edge(N)(V)
  \Edge(G)(U)
  \Edge(O)(T)
  
  \tikzstyle{EdgeStyle}=[color=green]
  \Edge(A)(B)
  \Edge(B)(C)
  \Edge(C)(A)
  
  \Edge(D)(E)
  \Edge(E)(F)
  \Edge(D)(F)
  
  \Edge(G)(H)
  \Edge(H)(I)
  \Edge(I)(G)
  
  \Edge(J)(K)
  \Edge(K)(L)
  \Edge(L)(M)
  \Edge(J)(L)
  \Edge(J)(M)
  \Edge(K)(M)

  \Edge(N)(O)
  \Edge(O)(P)
  \Edge(N)(P)
  
  \Edge(Q)(R)
  \Edge(R)(S)
  \Edge(Q)(S)

  \Edge(T)(U)
  \Edge(U)(V)
  \Edge(T)(V)
  
  \tikzstyle{EdgeStyle}=[color=blue]
  \Edge(A)(E)
  \Edge(A)(I)
  \Edge(C)(D)
  \Edge(F)(I)
  
  \tikzstyle{EdgeStyle}=[color=red]
  \Edge(J)(R)
  \Edge(L)(N)
  \Edge(M)(P)
  \Edge(P)(S)
  
\end{tikzpicture}
\caption{Cliques and supercliques highlighted}
\end{subfigure}

\begin{subfigure}{0.5\textwidth}
\begin{tikzpicture}[scale=0.7, transform shape]
\tikzset{vertex/.style = {draw, circle, fill=white, minimum size=1cm, inner sep=0pt}}
\tikzset{edge/.style = {-}}

  \Vertex[x=0.5,y=3]{1}

  \Vertex[x=0,y=0.5]{2}
  
  \Vertex[x=3,y=1.5]{3} 

  \Vertex[x=6.5,y=3]{4}
    
  \Vertex[x=7,y=1]{5}

  \Vertex[x=9,y=2]{6}

  \Vertex[x=5,y=0]{7}
  \tikzstyle{LabelStyle}=[fill=white,sloped]
  
  \tikzstyle{EdgeStyle}=[]
  \Edge(1)(5)
  \Edge(3)(4)
  \Edge(3)(7)
  \Edge(5)(7)

  \tikzstyle{EdgeStyle}=[color=blue]
  \Edge(1)(2)
  \Edge(2)(3)
  \Edge(1)(3)

  \tikzstyle{EdgeStyle}=[color=red]
  \Edge(4)(5)
  \Edge(5)(6)
  \Edge(4)(6)
\end{tikzpicture}
\caption{Effective graph obtained from supercliques}
\end{subfigure}
\caption{The edges in seven independent cliques are highlighted in green. The edges in the two independent supercliques they form are highlighted in blue and red, respectively. Edges in (b) participating in neither a clique nor a superclique are left black. Note that supercliques may consist of cliques with differing vertex number ({\it e.g.} the red superclique); also note that the supercliques shown in (c) form a further superclique with vertex set $\{\{1, 2, 3\}, \{4, 5, 6\}, \{7\}\}$ and edges coloured in black.}
\label{multiscale-fig}
\end{figure}

This section demonstrates that whenever we have cliques at some level, there can be interesting hybridised behaviours where local order may exist but the system may be globally disordered. This recapitulates observations of multiscale phenomena in biology and physics, where for some parameter regimes there is coherence at one level ({\it e.g.} the tissues constituting an individual organ) and non-uniformity at another ({\it e.g.} the differing organs in a body). It also shows that there can be order within a window of a transformer---if the attention heads have the right relationships---but is consistent with a lack of global coherence for all spins. 

\section{Applications and limitations}\label{limitations-section}

There are a number of compelling applications and important limitations of these results; we will discuss them here.  

Whilst a simplification, the fixed window models the practical context length limitations that real systems face. As the context increases, it becomes computationally infeasible to predict the next token---and only the last few words are fed as input to the model. Intuitively, this leads the model to completely forget what is left outside of its input window, placing constraints on how coherent the model is able to be over long periods of time. This will be a useful observation in \S\ref{sec:conclusion}. On the other hand, we have neglected to consider newly emerging strategies for expanding context or making the context window dynamic, such as those highlighted in \cite{wang2024beyond}. We hope to study these in the future. 

Theorem \ref{main-thm} is best applicable to systems falling under the Landau theory of equilibrium phase transitions. Many interesting non-equilibrium systems fall outside this regime, suggesting future directions which generalise these results to the kind of non-equilibrium free energies considered in stochastic thermodynamics \cite{qian2001relative, seifert2012stochastic, hohenberg2015introduction, seifert2018stochastic, lucarini2020response, zakine2023minimum}. However, in certain situations the same scaling relationships are respected and simply become dynamic in time or localised in space \cite{hohenberg1977theory, marro2005nonequilibrium}, meaning the results generalise readily to those cases. At criticality and out-of-equilibrium, finite-size effects become relevant to the computations of critical exponents \cite{notes}, but not to reasoning about the existence of phase transitions. Similarly, uniformity in coupling strength (as in Theorem \ref{clique-thm}) and window length (as in Theorem \ref{th:arlm}) can be assumed when studying asymptotics, since the only quantities of interest are the upper and lower bounds on the change in energy (as in Lemma \ref{lm:energy_scaling}).

\section{Conclusions and future directions}\label{sec:conclusion}

In this paper, we have developed a scaling argument extending Peierls' argument, which can be used to reason about the possibility of spontaneous order in systems with local interactions.

Biological systems are known to exploit the complex, hierarchical structures formed by cells in multicellular organisms to enable self-organisation and problem-solving in anatomical, physiological, and gene expression spaces, as well as the three-dimensional space of conventional behaviour and linguistics \cite{fields2022competency, levin2023darwin}. Many recent discussions have focussed on the similarities and differences in the dynamics that allow evolved, engineered, and hybrid systems to bind their parts towards efficient navigation to large-scale goals---a capability which is ultimately a defining feature of ``life'' \cite{mcshea2013machine, bongard2021living, watson2022design, barwich2024rage, zhang2025classical}.

\begin{figure}[h]
    \centering
    \includegraphics[width=\linewidth]{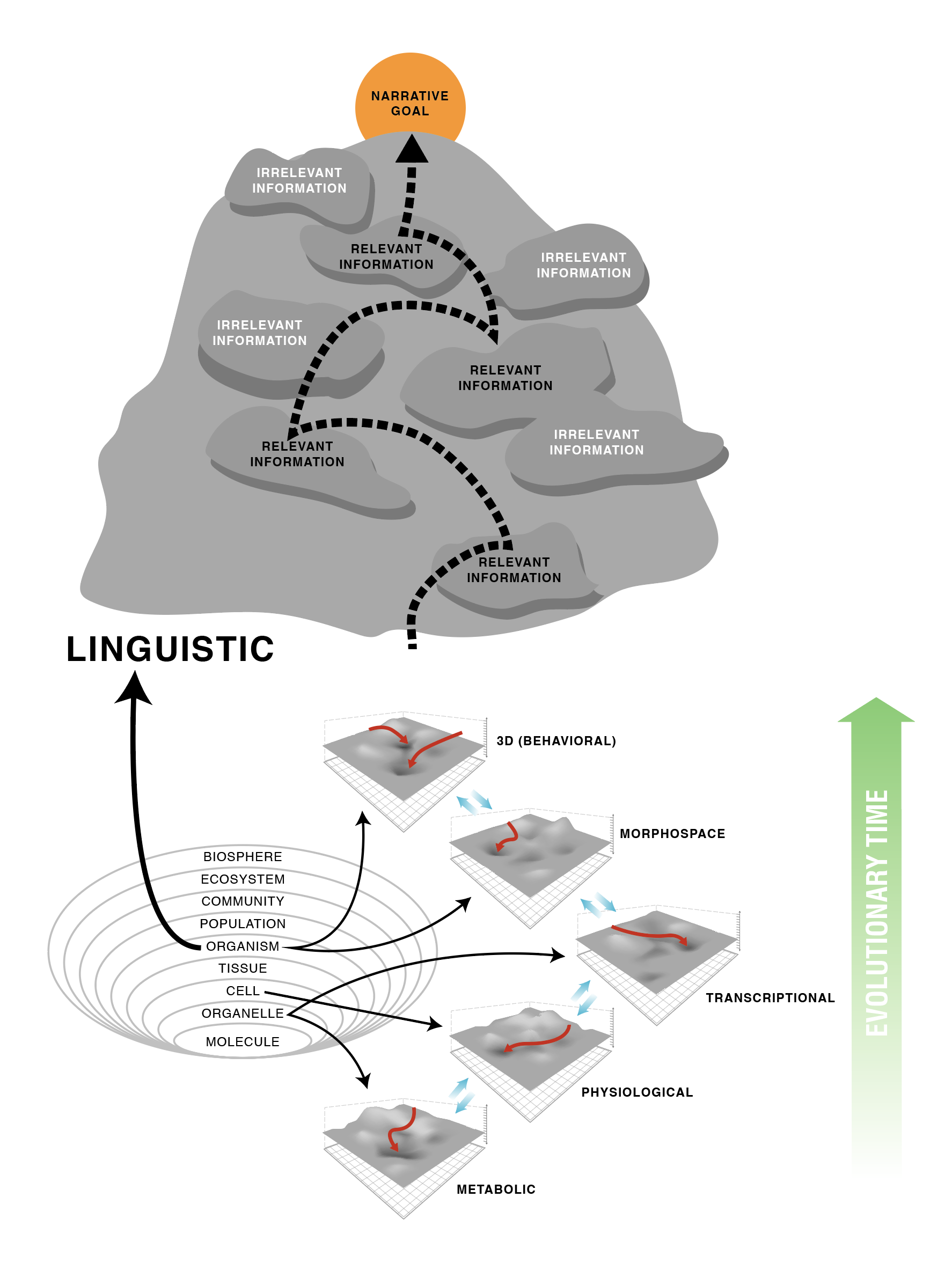}
    \caption{Cognition as goal-directed behaviour, with a focus on producing and processing meaningful pieces of text.}
    \label{fig:narrative-goals}
\end{figure}

In stark contrast, we have demonstrated that current natural language processing algorithms lack the necessary topology for self-organisation. This limitation explains their inability to generate long and coherent text that matches the complexity and consistency of biological systems. The key insight of our results is that topology is the critical factor differentiating these systems. Whilst cells in the human body can coordinate and organise over large scales, forming coherent tissues and organs, language models struggle to maintain consistency beyond their limited context windows. This disparity stems directly from the underlying topology of interactions in these systems.

By understanding the crucial role of topology in enabling self-organisation, we can begin to envision new architectures for natural language processing and biologically-inspired computing that might better emulate the self-organising capabilities of biological systems. The inability for autoregressive large language models to maintain states of long-range order resembles the {\it tangential speech} or {\it derailment} in formal thought disorder, such as that found in schizophrenia and other forms of psychosis \cite{kircher2018formal}, lending a complementary aspect to the `hallucinations' seen in current large language models. Proposed therapeutic approaches to formal thought disorder involve refining the articulated thought in conversation or written ({\it i.e.} diagrammatic) form under the guidance of a psychotherapist \cite{palmier2017cognitive}, suggesting that interacting with an environment is a way to enforce coherence of information.\footnote{We thank Karl Friston for elaborating on these points.} This further suggests that an {\it embodied} world model, extending the system in space and time by its interactions with an environment, can be leveraged to maintain coherence. We hypothesise this explains why stigmergy and other forms of extracellular signalling arise in biological systems, which is known to enhance the ability for a collective system to order itself \cite{theraulaz1999brief, marsh2008stigmergic, giuggioli2013stigmergy, gloag2015bacterial, heylighen2016stigmergy, sims2023stigmergic} and has been used in robot design and control for this feature \cite{holland1999stigmergy, valckenaers2007mas, boldini2024stigmergy}. Indeed, human brains are also unable to maintain the coherence of generated text past certain lengths of time due to energy constraints on neural processing, but have tools to cope with this, such as memory aids and cues in the environment. 

Throughout we have assumed that the free energy is minimised. For some systems the equilibration time is sufficiently long that in the interest of practicality one would be interested in relaxing this assumption---for example, when the energy landscape is very rugged. One calls such systems spin-glasses \cite{castellani2005spin}.

Studying the phase transitions of glassy systems can be challenging even for relatively simple glasses, such as Hopfield networks \cite{amit1987statistical,mezard2017mean} or the Edwards--Anderson model \cite{edwards1975theory}. In forthcoming work we will give similar estimates constraining the existence of phases with frozen disorder by the topology of the underlying lattice. 

\bibliographystyle{unsrt}
\bibliography{main}

@inproceedings{
ramsauer2020hopfield,
title={{H}opfield Networks is All You Need},
author={Hubert Ramsauer and Bernhard Sch{\"a}fl and Johannes Lehner and Philipp Seidl and Michael Widrich and Lukas Gruber and Markus Holzleitner and Thomas Adler and David Kreil and Michael K Kopp and G{\"u}nter Klambauer and Johannes Brandstetter and Sepp Hochreiter},
booktitle={International Conference on Learning Representations},
year={2021},
}

@article{amit1987statistical,
  title={Statistical mechanics of neural networks near saturation},
  author={Amit, Daniel J and Gutfreund, Hanoch and Sompolinsky, Haim},
  journal={Annals of Physics},
  volume={173},
  number={1},
  pages={30--67},
  year={1987},
  publisher={Elsevier},
  url={https://www.sciencedirect.com/science/article/abs/pii/0003491687900923}
}

@article{mezard2017mean,
  title={Mean-field message-passing equations in the {H}opfield model and its generalizations},
  author={M{\'e}zard, Marc},
  journal={Physical Review E},
  volume={95},
  number={2},
  pages={022117},
  year={2017},
  publisher={APS},
  url={https://journals.aps.org/pre/pdf/10.1103/PhysRevE.95.022117?casa_token=irEA7D-IouMAAAAA%3AhrVcv_dlQ82vqsdDQn12crgdAJZDg9HrDkTA65_3qd-olk7kPv0TAZzKqHBb4Ybs28K47NnobVE10G0}
}

@article{krotov2016dense,
  title={Dense associative memory for pattern recognition},
  author={Krotov, Dmitry and Hopfield, John J},
  journal={Advances in Neural Information Processing Systems},
  volume={29},
  year={2016},
  url={https://proceedings.neurips.cc/paper_files/paper/2016/file/eaae339c4d89fc102edd9dbdb6a28915-Paper.pdf}
}

@article{demircigil2017model,
  title={On a model of associative memory with huge storage capacity},
  author={Demircigil, Mete and Heusel, Judith and L{\"o}we, Matthias and Upgang, Sven and Vermet, Franck},
  journal={Journal of Statistical Physics},
  volume={168},
  pages={288--299},
  year={2017},
  publisher={Springer},
  url={https://link.springer.com/article/10.1007/s10955-017-1806-y}
}

@article{castellani2005spin,
  title={Spin-glass theory for pedestrians},
  author={Castellani, Tommaso and Cavagna, Andrea},
  journal={Journal of Statistical Mechanics: Theory and Experiment},
  volume={2005},
  number={05},
  pages={P05012},
  year={2005},
  publisher={IOP Publishing},
  url={https://arxiv.org/abs/cond-mat/0505032}
}

@article{edwards1975theory,
  title={Theory of spin glasses},
  author={Edwards, Samuel Frederick and Anderson, Phil W},
  journal={Journal of Physics F: Metal Physics},
  volume={5},
  number={5},
  pages={965},
  year={1975},
  publisher={IOP Publishing},
  url={https://iopscience.iop.org/article/10.1088/0305-4608/5/5/017/pdf}
}

@article{fields2022competency,
  title={Competency in navigating arbitrary spaces as an invariant for analyzing cognition in diverse embodiments},
  author={Fields, Chris and Levin, Michael},
  journal={Entropy},
  volume={24},
  number={6},
  pages={819},
  year={2022},
  publisher={MDPI},
  url={https://www.mdpi.com/1099-4300/24/6/819}
}

@article{lyon2021reframing,
  title={Reframing cognition: Getting down to biological basics},
  author={Lyon, Pamela and Keijzer, Fred and Arendt, Detlev and Levin, Michael},
  journal={Philosophical Transactions of the Royal Society B},
  volume={376},
  number={1820},
  pages={20190750},
  year={2021},
  publisher={The Royal Society},
  url={https://royalsocietypublishing.org/doi/pdf/10.1098/rstb.2019.0750}
}

@article{miller2023revised,
  title={A revised central dogma for the 21st century: All biology is cognitive information processing},
  author={Miller, William B and Balu{\v{s}}ka, Franti{\v{s}}ek and Reber, Arthur S},
  journal={Progress in Biophysics and Molecular Biology},
  year={2023},
  publisher={Elsevier},
  url={https://www.sciencedirect.com/science/article/pii/S0079610723000573}
}

@article{baluvska2023cellular,
  title={Cellular and evolutionary perspectives on organismal cognition: From unicellular to multicellular organisms},
  author={Balu{\v{s}}ka, Franti{\v{s}}ek and Miller, William B and Reber, Arthur S},
  journal={Biological Journal of the Linnean Society},
  volume={139},
  number={4},
  pages={503--513},
  year={2023},
  publisher={Oxford University Press UK},
  url={https://academic.oup.com/biolinnean/article/139/4/503/6554222}
}

@article{reber2021cognition,
  title={Cognition in some surprising places},
  author={Reber, Arthur S and Balu{\v{s}}ka, Franti{\v{s}}ek},
  journal={Biochemical and Biophysical Research Communications},
  volume={564},
  pages={150--157},
  year={2021},
  publisher={Elsevier},
  url={https://www.sciencedirect.com/science/article/pii/S0006291X20317204}
}

@article{deisboeck2009collective,
  title={Collective behavior in cancer cell populations},
  author={Deisboeck, Thomas S and Couzin, Iain D},
  journal={Bioessays},
  volume={31},
  number={2},
  pages={190--197},
  year={2009},
  publisher={Wiley Online Library},
  url={https://onlinelibrary.wiley.com/doi/pdfdirect/10.1002/bies.200800084},
}

@article{couzin2007collective,
  title={Collective minds},
  author={Couzin, Iain D},
  journal={Nature},
  volume={445},
  number={7129},
  pages={715--715},
  year={2007},
  publisher={Nature Publishing Group UK London},
  url={https://www.nature.com/articles/445715a}
}

@article{levin2019computational,
  title={The computational boundary of a ``self'': Developmental bioelectricity drives multicellularity and scale-free cognition},
  author={Levin, Michael},
  journal={Frontiers in Psychology},
  volume={10},
  pages={2688},
  year={2019},
  publisher={Frontiers Media SA},
  url={https://www.frontiersin.org/articles/10.3389/fpsyg.2019.02688/full}
}

@article{pio2023scaling,
  title={The scaling of goals from cellular to anatomical homeostasis: An evolutionary simulation, experiment and analysis},
  author={Pio-Lopez, L{\'e}o and Bischof, Johanna and LaPalme, Jennifer V and Levin, Michael},
  journal={Interface Focus},
  volume={13},
  number={3},
  pages={20220072},
  year={2023},
  publisher={The Royal Society},
  url={https://royalsocietypublishing.org/doi/pdf/10.1098/rsfs.2022.0072}
}

@article{mathews2023cellular,
  title={Cellular signaling pathways as plastic, proto-cognitive systems: Implications for biomedicine},
  author={Mathews, Juanita and Chang, Alan Jaelyn and Devlin, Liam and Levin, Michael},
  journal={Patterns},
  volume={4},
  number={5},
  year={2023},
  publisher={Elsevier},
  url={https://www.cell.com/patterns/pdf/S2666-3899(23)00077-6.pdf}
}

@article{levin2023bioelectric,
  title={Bioelectric networks: The cognitive glue enabling evolutionary scaling from physiology to mind},
  author={Levin, Michael},
  journal={Animal Cognition},
  pages={1--27},
  year={2023},
  publisher={Springer},
  url={https://link.springer.com/article/10.1007/s10071-023-01780-3}
}

@article{levin2023darwin,
  title={Darwin’s agential materials: Evolutionary implications of multiscale competency in developmental biology},
  author={Levin, Michael},
  journal={Cellular and Molecular Life Sciences},
  volume={80},
  number={6},
  pages={142},
  year={2023},
  publisher={Springer},
  url={https://link.springer.com/article/10.1007/s00018-023-04790-z}
}

@article{lagasse2023future,
  title={Future medicine: From molecular pathways to the collective intelligence of the body},
  author={Lagasse, Eric and Levin, Michael},
  journal={Trends in Molecular Medicine},
  year={2023},
  publisher={Elsevier},
  url={https://www.cell.com/trends/molecular-medicine/fulltext/S1471-4914(23)00142-9}
}

@article{weber2016cellular,
  title={The cellular {I}sing model: A framework for phase transitions in multicellular environments},
  author={Weber, Marc and Buceta, Javier},
  journal={Journal of The Royal Society Interface},
  volume={13},
  number={119},
  pages={20151092},
  year={2016},
  publisher={The Royal Society},
  url={https://royalsocietypublishing.org/doi/pdf/10.1098/rsif.2015.1092}
}

@article{torquato2011toward,
  title={Toward an {I}sing model of cancer and beyond},
  author={Torquato, Salvatore},
  journal={Physical Biology},
  volume={8},
  number={1},
  pages={015017},
  year={2011},
  publisher={IOP Publishing},
  url={https://iopscience.iop.org/article/10.1088/1478-3975/8/1/015017/pdf}
}

@article{marsh2008stigmergic,
  title={Stigmergic epistemology, stigmergic cognition},
  author={Marsh, Leslie and Onof, Christian},
  journal={Cognitive Systems Research},
  volume={9},
  number={1-2},
  pages={136--149},
  year={2008},
  publisher={Elsevier},
  url={https://www.sciencedirect.com/science/article/pii/S1389041707000290}
}

@article{theraulaz1999brief,
  title={A brief history of stigmergy},
  author={Theraulaz, Guy and Bonabeau, Eric},
  journal={Artificial Life},
  volume={5},
  number={2},
  pages={97--116},
  year={1999},
  publisher={MIT Press},
  url={https://www.academia.edu/download/32240452/29.pdf}
}

@article{stone1997spirit,
  title={The spirit of {D'A}rcy {T}hompson dwells in empirical morphospace},
  author={Stone, Jonathon R},
  journal={Mathematical Biosciences},
  volume={142},
  number={1},
  pages={13--30},
  year={1997},
  publisher={Elsevier},
  url={http://www.ncbi.nlm.nih.gov/pubmed/9125858}
}

@article{abzhanov2017old,
  title={The old and new faces of morphology: The legacy of {D'A}rcy {T}hompson's `theory of transformations' and `laws of growth'},
  author={Abzhanov, Arhat},
  journal={Development},
  volume={144},
  number={23},
  pages={4284--4297},
  year={2017},
  publisher={The Company of Biologists Ltd},
  url={https://www.ncbi.nlm.nih.gov/pubmed/29183941}
}

@book{landau,
  title={Statistical Physics},
  author={Landau, Lev D and Lifshitz, Evgenii M},
  volume={5},
  series={Course of Theoretical Physics},
  year={1958},
  publisher={Pergamon Press}
}

@article{peierls1936ising,
  title={On {I}sing's model of ferromagnetism},
  author={Peierls, Rudolf E},
  journal={Mathematical Proceedings of the Cambridge Philosophical Society},
  volume={32},
  number={3},
  pages={477--481},
  year={1936},
  organization={Cambridge University Press}
}

@article{notes,
  title={Magnetisation and mean field theory in the {I}sing model},
  author={Sakthivadivel, Dalton A R},
  journal={SciPost Physics Lecture Notes},
  volume={35},
  year={2022}
}

@article{hopfield1982neural,
  title={Neural networks and physical systems with emergent collective computational abilities},
  author={Hopfield, John J},
  journal={Proceedings of the National Academy of Sciences},
  volume={79},
  number={8},
  pages={2554--2558},
  year={1982},
  publisher={National Acad Sciences}
}

@article{kriegman2020scalable,
  title={A scalable pipeline for designing reconfigurable organisms},
  author={Kriegman, Sam and Blackiston, Douglas and Levin, Michael and Bongard, Josh},
  journal={Proceedings of the National Academy of Sciences},
  volume={117},
  number={4},
  pages={1853--1859},
  year={2020},
  publisher={National Acad Sciences}
}

@book{box1970time,
  title={Time Series Analysis: Forecasting and Control},
  author={Box, George E P and Jenkins, Gwilym M},
  year={1970},
  publisher={Holden-Day}
}

@article{yang2023diffusion,
  title={Diffusion models: A comprehensive survey of methods and applications},
  author={Yang, Ling and Zhang, Zhilong and Song, Yang and Hong, Shenda and Xu, Runsheng and Zhao, Yue and Zhang, Wentao and Cui, Bin and Yang, Ming-Hsuan},
  journal={ACM Computing Surveys},
  volume={56},
  number={4},
  pages={1--39},
  year={2023},
  publisher={ACM New York, NY, USA}
}

@article{ranzato2006efficient,
  title={Efficient learning of sparse representations with an energy-based model},
  author={Ranzato, Marc'Aurelio and Poultney, Christopher and Chopra, Sumit and Le Cun, Yann},
  journal={Advances in Neural Information Processing Systems},
  volume={19},
  year={2006}
}

@article{levin2022technological,
  title={Technological approach to mind everywhere: An experimentally-grounded framework for understanding diverse bodies and minds},
  author={Levin, Michael},
  journal={Frontiers in Systems Neuroscience},
  volume={16},
  pages={768201},
  year={2022},
  publisher={Frontiers Media SA}
}

@article{ramstead2021neural,
  title={Neural and phenotypic representation under the free-energy principle},
  author={Ramstead, Maxwell and Hesp, Casper and Tschantz, Alexander and Smith, Ryan and Constant, Axel and Friston, Karl},
  journal={Neuroscience \& Biobehavioral Reviews},
  volume={120},
  pages={109--122},
  year={2021},
  publisher={Elsevier}
}

@incollection{levincollective,
    author = {Levin, Michael},
    isbn = {9780262376013},
    title = {Collective Intelligence of Morphogenesis as a Teleonomic Process},
    booktitle = {Evolution ``On Purpose'': Teleonomy in Living Systems},
    publisher = {The MIT Press},
    year = {2023},
    month = {08},
    doi = {10.7551/mitpress/14642.003.0013},
    url = {https://doi.org/10.7551/mitpress/14642.003.0013},
    eprint = {https://direct.mit.edu/book/chapter-pdf/2155229/c011300\_9780262376013.pdf},
}

@article{kircher2018formal,
  title={Formal thought disorders: From phenomenology to neurobiology},
  author={Kircher, Tilo and Br{\"o}hl, Henrike and Meier, Felicitas and Engelen, Jennifer},
  journal={The Lancet Psychiatry},
  volume={5},
  number={6},
  pages={515--526},
  year={2018},
  publisher={Elsevier}
}

@article{watson2022design,
  title={Design for an individual: Connectionist approaches to the evolutionary transitions in individuality},
  author={Watson, Richard A and Levin, Michael and Buckley, Christopher L},
  journal={Frontiers in Ecology and Evolution},
  volume={10},
  pages={823588},
  year={2022},
  publisher={Frontiers Media SA}
}

@article{watson2011optimization,
  title={Optimization in ``self-modeling'' complex adaptive systems},
  author={Watson, Richard A and Buckley, Christopher L and Mills, Rob},
  journal={Complexity},
  volume={16},
  number={5},
  pages={17--26},
  year={2011},
  publisher={Wiley Online Library}
}

@article{mcmillen2024collective,
  title={Collective intelligence: A unifying concept for integrating biology across scales and substrates},
  author={McMillen, Patrick and Levin, Michael},
  journal={Communications Biology},
  volume={7},
  number={1},
  pages={378},
  year={2024},
  publisher={Nature Publishing Group UK London}
}

@article{johnson2011interactive,
  title={Interactive specialization: A domain-general framework for human functional brain development?},
  author={Johnson, Mark H},
  journal={Developmental Cognitive Neuroscience},
  volume={1},
  number={1},
  pages={7--21},
  year={2011},
  publisher={Elsevier}
}

@article{balduzzi2008integrated,
  title={Integrated information in discrete dynamical systems: motivation and theoretical framework},
  author={Balduzzi, David and Tononi, Giulio},
  journal={PLoS Computational Biology},
  volume={4},
  number={6},
  pages={e1000091},
  year={2008},
  publisher={Public Library of Science San Francisco, USA}
}

@article{mcgivern2020active,
  title={Active materials: Minimal models of cognition?},
  author={McGivern, Patrick},
  journal={Adaptive Behavior},
  volume={28},
  number={6},
  pages={441--451},
  year={2020},
  publisher={SAGE Publications Sage UK: London, England}
}

@article{whitelam2015hierarchical,
  title={Hierarchical assembly may be a way to make large information-rich structures},
  author={Whitelam, Stephen},
  journal={Soft Matter},
  volume={11},
  number={42},
  pages={8225--8235},
  year={2015},
  publisher={Royal Society of Chemistry}
}

@article{haxton2013hierarchical,
  title={Do hierarchical structures assemble best via hierarchical pathways?},
  author={Haxton, Thomas K and Whitelam, Stephen},
  journal={Soft Matter},
  volume={9},
  number={29},
  pages={6851--6861},
  year={2013},
  publisher={Royal Society of Chemistry}
}

@article{sanchez2012spontaneous,
  title={Spontaneous motion in hierarchically assembled active matter},
  author={Sanchez, Tim and Chen, Daniel T N and DeCamp, Stephen J and Heymann, Michael and Dogic, Zvonimir},
  journal={Nature},
  volume={491},
  number={7424},
  pages={431--434},
  year={2012},
  publisher={Nature Publishing Group UK London}
}

@article{doncom2017dispersity,
  title={Dispersity effects in polymer self-assemblies: a matter of hierarchical control},
  author={Doncom, Kay E B and Blackman, Lewis D and Wright, Daniel B and Gibson, Matthew I and O'Reilly, Rachel K},
  journal={Chemical Society Reviews},
  volume={46},
  number={14},
  pages={4119--4134},
  year={2017},
  publisher={Royal Society of Chemistry}
}

@article{kuchling2020morphogenesis,
  title={Morphogenesis as {B}ayesian inference: A variational approach to pattern formation and control in complex biological systems},
  author={Kuchling, Franz and Friston, Karl and Georgiev, Georgi and Levin, Michael},
  journal={Physics of Life Reviews},
  volume={33},
  pages={88--108},
  year={2020},
  publisher={Elsevier}
}

@article{kanwisher2010functional,
  title={Functional specificity in the human brain: a window into the functional architecture of the mind},
  author={Kanwisher, Nancy},
  journal={Proceedings of the National Academy of Sciences},
  volume={107},
  number={25},
  pages={11163--11170},
  year={2010},
  publisher={National Acad Sciences}
}

@book{thomson1988morphogenesis,
  title={Morphogenesis and Evolution},
  author={Thomson, Keith Stewart},
  year={1988},
  publisher={Oxford University Press}
}

@article{chan2017coordination,
  title={Coordination of morphogenesis and cell-fate specification in development},
  author={Chan, Chii J and Heisenberg, Carl-Philipp and Hiiragi, Takashi},
  journal={Current Biology},
  volume={27},
  number={18},
  pages={R1024--R1035},
  year={2017},
  publisher={Elsevier}
}

@article{levin2012morphogenetic,
  title={Morphogenetic fields in embryogenesis, regeneration, and cancer: non-local control of complex patterning},
  author={Levin, Michael},
  journal={Biosystems},
  volume={109},
  number={3},
  pages={243--261},
  year={2012},
  publisher={Elsevier}
}

@article{chaturvedi2005multiscale,
  title={On multiscale approaches to three-dimensional modelling of morphogenesis},
  author={Chaturvedi, Rajiv and Huang, Chengbang and Kazmierczak, Bogdan and Schneider, T and Izaguirre, Jesus A and Glimm, Tilmann and Hentschel, H George E and Glazier, J A and Newman, S A and Alber, M S},
  journal={Journal of the Royal Society Interface},
  volume={2},
  number={3},
  pages={237--253},
  year={2005},
  publisher={The Royal Society London}
}

@article{graner1992simulation,
  title={Simulation of biological cell sorting using a two-dimensional extended {P}otts model},
  author={Graner, Fran{\c{c}}ois and Glazier, James A},
  journal={Physical Review Letters},
  volume={69},
  number={13},
  pages={2013},
  year={1992},
  publisher={APS}
}

@article{szabo2013cellular,
  title={Cellular {P}otts modeling of tumor growth, tumor invasion, and tumor evolution},
  author={Szab{\'o}, Andr{\'a}s and Merks, Roeland M H},
  journal={Frontiers in oncology},
  volume={3},
  pages={87},
  year={2013},
  publisher={Frontiers Media SA}
}

@article{engler2009multiscale,
  title={Multiscale modeling of form and function},
  author={Engler, Adam J and Humbert, Patrick O and Wehrle-Haller, Bernhard and Weaver, Valerie M},
  journal={Science},
  volume={324},
  number={5924},
  pages={208--212},
  year={2009},
  publisher={American Association for the Advancement of Science}
}

@article{heylighen2016stigmergy,
  title={Stigmergy as a universal coordination mechanism {I}: Definition and components},
  author={Heylighen, Francis},
  journal={Cognitive Systems Research},
  volume={38},
  pages={4--13},
  year={2016},
  publisher={Elsevier}
}

@article{gloag2015bacterial,
  title={Bacterial stigmergy: an organising principle of multicellular collective behaviours of bacteria},
  author={Gloag, Erin S and Turnbull, Lynne and Whitchurch, Cynthia B},
  journal={Scientifica},
  volume={2015},
  number={1},
  pages={387342},
  year={2015},
  publisher={Wiley Online Library}
}

@article{krotov2023new,
  title={A new frontier for {H}opfield networks},
  author={Krotov, Dmitry},
  journal={Nature Reviews Physics},
  volume={5},
  number={7},
  pages={366--367},
  year={2023},
  publisher={Nature Publishing Group UK London}
}

@inproceedings{millidge2022universal,
  title={Universal {H}opfield networks: A general framework for single-shot associative memory models},
  author={Millidge, Beren and Salvatori, Tommaso and Song, Yuhang and Lukasiewicz, Thomas and Bogacz, Rafal},
  booktitle={International Conference on Machine Learning},
  pages={15561--15583},
  year={2022}
}

@article{mcshea2013machine,
  title={Machine wanting},
  author={McShea, Daniel W},
  journal={Studies in History and Philosophy of Science Part C: Studies in History and Philosophy of Biological and Biomedical Sciences},
  volume={44},
  number={4},
  pages={679--687},
  year={2013},
  publisher={Elsevier}
}

@article{barwich2024rage,
  title={Rage against the what? The machine metaphor in biology},
  author={Barwich, Ann-Sophie and Rodriguez, Matthew James},
  journal={Biology \& Philosophy},
  volume={39},
  number={4},
  pages={14},
  year={2024},
  publisher={Springer}
}

@article{bongard2021living,
  title={Living things are not (20th century) machines: updating mechanism metaphors in light of the modern science of machine behavior},
  author={Bongard, Joshua and Levin, Michael},
  journal={Frontiers in Ecology and Evolution},
  volume={9},
  pages={650726},
  year={2021},
  publisher={Frontiers}
}

@article{lin2022survey,
  title={A survey of transformers},
  author={Lin, Tianyang and Wang, Yuxin and Liu, Xiangyang and Qiu, Xipeng},
  journal={AI Open},
  volume={3},
  pages={111--132},
  year={2022},
  publisher={Elsevier}
}

@article{vaswani2017attention,
  title={Attention is all you need},
  author={Vaswani, Ashish and Shazeer, Noam and Parmar, Niki and Uszkoreit, Jakob and Jones, Llion and Gomez, Aidan N and Kaiser, {\L}ukasz and Polosukhin, Illia},
  journal={Advances in Neural Information Processing Systems},
  volume={30},
  year={2017}
}

@inproceedings{wang2024beyond,
  title={Beyond the Limits: A Survey of Techniques to Extend the Context Length in Large Language Models},
  author={Wang, Xindi and Salmani, Mahsa and Omidi, Parsa and Ren, Xiangyu and Rezagholizadeh, Mehdi and Eshaghi, Armaghan},
  booktitle={Proceedings of the Thirty-Third International Joint Conference on Artificial Intelligence},
  pages={8299--8307},
  year={2024}
}

@article{qian2001relative,
  title={Relative entropy: Free energy associated with equilibrium fluctuations and nonequilibrium deviations},
  author={Qian, Hong},
  journal={Physical Review E},
  volume={63},
  number={4},
  pages={042103},
  year={2001},
  publisher={APS}
}

@article{seifert2012stochastic,
  title={Stochastic thermodynamics, fluctuation theorems and molecular machines},
  author={Seifert, Udo},
  journal={Reports on Progress in Physics},
  volume={75},
  number={12},
  pages={126001},
  year={2012},
  publisher={IOP Publishing}
}

@article{seifert2018stochastic,
  title={Stochastic thermodynamics: From principles to the cost of precision},
  author={Seifert, Udo},
  journal={Physica A: Statistical Mechanics and its Applications},
  volume={504},
  pages={176--191},
  year={2018},
  publisher={Elsevier}
}

@article{lucarini2020response,
  title={Response theory and phase transitions for the thermodynamic limit of interacting identical systems},
  author={Lucarini, Valerio and Pavliotis, Grigorios A and Zagli, Niccol{\`o}},
  journal={Proceedings of the Royal Society A},
  volume={476},
  number={2244},
  pages={20200688},
  year={2020},
  publisher={The Royal Society Publishing}
}

@article{zakine2023minimum,
  title={Minimum-action method for nonequilibrium phase transitions},
  author={Zakine, Ruben and Vanden-Eijnden, Eric},
  journal={Physical Review X},
  volume={13},
  number={4},
  pages={041044},
  year={2023},
  publisher={APS}
}

@article{hohenberg2015introduction,
  title={An introduction to the {G}inzburg--{L}andau theory of phase transitions and nonequilibrium patterns},
  author={Hohenberg, Pierre C and Krekhov, Alexei P},
  journal={Physics Reports},
  volume={572},
  pages={1--42},
  year={2015},
  publisher={Elsevier}
}

@article{hohenberg1977theory,
  title={Theory of dynamic critical phenomena},
  author={Hohenberg, Pierre C and Halperin, Bertrand I},
  journal={Reviews of Modern Physics},
  volume={49},
  number={3},
  pages={435},
  year={1977},
  publisher={APS}
}

@book{marro2005nonequilibrium,
  title={Nonequilibrium phase transitions in lattice models},
  author={Marro, Joaquin and Dickman, Ronald},
  publisher={Cambridge University Press},
  year={2005}
}

@article{palmier2017cognitive,
  title={Cognitive behavioural therapy for thought disorder in psychosis},
  author={Palmier-Claus, Jasper and Griffiths, Robert and Murphy, Elizabeth and Parker, Sophie and Longden, Eleanor and Bowe, Samantha and Steele, Ann and French, Paul and Morrison, Anthony and Tai, Sara},
  journal={Psychosis: Psychological, Social and Integrative Approaches},
  volume={9},
  number={4},
  pages={347--357},
  year={2017},
  publisher={Taylor \& Francis}
}

@article{phuong2022formal,
  title={Formal algorithms for transformers},
  author={Phuong, Mary and Hutter, Marcus},
  note={Preprint arXiv:2207.09238},
  year={2022}
}

@article{kwa2025measuring,
  title={Measuring {AI} ability to complete long tasks},
  author={Kwa, Thomas and West, Ben and Becker, Joel and Deng, Amy and Garcia, Katharyn and Hasin, Max and Jawhar, Sami and Kinniment, Megan and Rush, Nate and Von Arx, Sydney and others},
  note={Preprint arXiv:2503.14499},
  year={2025}
}

@article{sims2023stigmergic,
  title={Stigmergic coordination and minimal cognition in plants},
  author={Sims, Ric and Yilmaz, {\"O}zlem},
  journal={Adaptive Behavior},
  volume={31},
  number={3},
  pages={265--280},
  year={2023},
  publisher={Sage Publications Sage UK: London, England}
}

@article{zhang2025classical,
  title={Classical sorting algorithms as a model of morphogenesis: Self-sorting arrays reveal unexpected competencies in a minimal model of basal intelligence},
  author={Zhang, Taining and Goldstein, Adam and Levin, Michael},
  journal={Adaptive Behavior},
  volume={33},
  number={1},
  pages={25--54},
  year={2025},
  publisher={Sage Publications Sage UK: London, England}
}

@article{holland1999stigmergy,
  title={Stigmergy, self-organization, and sorting in collective robotics},
  author={Holland, Owen and Melhuish, Chris},
  journal={Artificial Life},
  volume={5},
  number={2},
  pages={173--202},
  year={1999},
  publisher={MIT Press One Rogers Street, Cambridge, MA 02142-1209, USA}
}

@article{giuggioli2013stigmergy,
  title={Stigmergy, collective actions, and animal social spacing},
  author={Giuggioli, Luca and Potts, Jonathan R and Rubenstein, Daniel I and Levin, Simon A},
  journal={Proceedings of the National Academy of Sciences},
  volume={110},
  number={42},
  pages={16904--16909},
  year={2013},
  publisher={National Academy of Sciences}
}

@article{valckenaers2007mas,
  title={{MAS} coordination and control based on stigmergy},
  author={Valckenaers, Paul and Saint Germain, Bart and Verstraete, Paul and Van Brussel, Hendrik},
  journal={Computers in Industry},
  volume={58},
  number={7},
  pages={621--629},
  year={2007},
  publisher={Elsevier}
}

@article{boldini2024stigmergy,
  title={Stigmergy: from mathematical modelling to control},
  author={Boldini, Alain and Civitella, Martina and Porfiri, Maurizio},
  journal={Royal Society Open Science},
  volume={11},
  number={9},
  pages={240845},
  year={2024},
  publisher={The Royal Society}
}

@misc{website,
    title={Requirements for self organization},
    howpublished={\url{https://francesco215.github.io/Language\_CA/}},
    note={DOI: 10.5281/zenodo.8416764} 
}

@inproceedings{khandelwal-etal-2018-sharp,
    title = {Sharp Nearby, Fuzzy Far Away: How Neural Language Models Use Context},
    author = "Khandelwal, Urvashi  and
      He, He  and
      Qi, Peng  and
      Jurafsky, Dan",
    booktitle = "Proceedings of the 56th Annual Meeting of the Association for Computational Linguistics",
    year = "2018",
    publisher = "Association for Computational Linguistics",
    url = "https://aclanthology.org/P18-1027/",
    doi = "10.18653/v1/P18-1027",
    pages = "284--294"
}

@inproceedings{malkin-etal-2022-coherence,
    title = {Coherence boosting: When your pretrained language model is not paying enough attention},
    author = "Malkin, Nikolay  and
      Wang, Zhen  and
      Jojic, Nebojsa",
    booktitle = "Proceedings of the 60th Annual Meeting of the Association for Computational Linguistics",
    month = may,
    year = "2022",
    publisher = "Association for Computational Linguistics",
    url = "https://aclanthology.org/2022.acl-long.565/",
    doi = "10.18653/v1/2022.acl-long.565",
    pages = "8214--8236"
}

@inproceedings{sun-etal-2021-long,
    title = {Do Long-Range Language Models Actually Use Long-Range Context?},
    author = "Sun, Simeng  and
      Krishna, Kalpesh  and
      Mattarella-Micke, Andrew  and
      Iyyer, Mohit",
    booktitle = "Proceedings of the 2021 Conference on Empirical Methods in Natural Language Processing",
    month = nov,
    year = "2021",
    publisher = "Association for Computational Linguistics",
    url = "https://aclanthology.org/2021.emnlp-main.62/",
    doi = "10.18653/v1/2021.emnlp-main.62",
    pages = "807--822"
}

@inproceedings{zhao2021calibrate,
  title={Calibrate before use: Improving few-shot performance of language models},
  author={Zhao, Zihao and Wallace, Eric and Feng, Shi and Klein, Dan and Singh, Sameer},
  booktitle={International Conference on Machine Learning},
  pages={12697--12706},
  year={2021},
  organization={PMLR}
}
\end{document}